\DeclareMathOperator*{\argmin}{arg\,min}
\newtheorem{theorem}{Theorem}
\let\latexparagraph\paragraph
\RenewDocumentCommand{\paragraph}{som}{%
  \IfBooleanTF{#1}
    {\latexparagraph*{#3}}
    {\IfNoValueTF{#2}
       {\latexparagraph{\maybe@addperiod{#3}}}
       {\latexparagraph[#2]{\maybe@addperiod{#3}}}%
  }%
}
\newcommand{\maybe@addperiod}[1]{%
  #1\@addpunct{.}%
}
\title{EEvA: Fast Expert-Based Algorithms for Buffer Page Replacement}
\author{Alexander Demin \\ Ershov Institute of Informatics Systems \\ Novosibirsk, Russia \\ \texttt{alexandredemin@yandex.ru} \And 
Yuriy Dorn \\ 
MSU Institute for Artificial Intelligence \\ Moscow Institute of Physics and Technology \\
Moscow, Russia \\
\texttt{dornyv@my.msu.ru} 
\AND
Aleksandr Katrutsa \\
Skoltech, AIRI \\
Moscow, Russia \\
\texttt{amkatrutsa@gmail.com}
\And
Daniil Kazantsev \\
MSU Institute for Artificial Intelligence \\
Moscow, Russia \\
\texttt{dangl11626@gmail.com} 
\AND 
Ilgam Latypov\\
MSU Institute for Artificial Intelligence \\
Moscow Institute of Physics and Technology \\
Moscow, Russia \\
\texttt{i.latypov@iai.msu.ru} 
\And 
Yulia Maximlyuk \\
Sobolev Institute of Mathematics \\
Novosibirsk, Russia\\
\texttt{yumaximlyuk@gmail.com}
\And 
Denis Ponomaryov \\
Ershov Institute of Informatics Systems \\
Novosibirsk, Russia \\
\texttt{ponom@iis.nsk.su}
}
\date{}
\begin{document}

\maketitle

%%
%% The "author" command and its associated commands are used to define the authors and their affiliations.

% \author{Denis Ponomaryov}
% \orcid{0000-0002-7861-1608}
% \affiliation{%
%   \institution{Ershov Institute of Informatics Systems}
%   \city{Novosibirsk}
%   \country{Russia}
% }
% \email{ponom@iis.nsk.su}

%%
%% The abstract is a short summary of the work to be presented in the
%% article.
\begin{abstract}
Optimal page replacement is an important problem in efficient buffer management. The range of replacement strategies known in the literature varies from simple but efficient FIFO-based algorithms to more accurate but potentially costly methods tailored to specific data access patterns. The principal issue in adopting a pattern-specific replacement logic in a DB buffer manager is to guarantee non-degradation in general high-load regimes. In this paper, we propose a new family of page replacement algorithms for DB buffer manager which demonstrate a superior performance wrt competitors on custom data access patterns and imply a low computational overhead on TPC-C. We provide theoretical foundations and an extensive experimental study on the proposed algorithms which covers synthetic benchmarks and an implementation in an open-source DB kernel evaluated on TPC-C.

%The latency in processing users' queries is a crucial metric of database efficiency.
%To decrease this metric, different tricky strategies to manage database buffer were developed.
%Database buffer stores some set of queried pages to establish fast access to them if they are queried again.
%This study presents a novel expert-based algorithm for search the victim page in the buffer that is replaced by the currently loaded page. 
%Our algorithm heavily relies on the online optimization theory, which helps derive theoretical performance guarantees.
%Also, we have carefully estimated the computational overhead induced by our approach and propose different options to deduce the index of the victim page.  
% increasing this key metric of database efficiency through the better utilization of a database buffer, where the pre-loaded pages are stored for fast access.
% Our algorithm is based on the online optimization setup, takes into account the history of the queries, and has neglected overhead by construction.
%We perform an extensive experimental comparison of our approach with relevant competitors in synthetic environment and with TPC-C benchmark run in the openGauss database.
%The numerical experiments demonstrate that the proposed approach provides a higher hit rate than competitors in different usage scenarios in both experimental setups.
% something about theoretical properties?
\end{abstract}

\maketitle

\section{Introduction}
A buffer manager is a critical component of database management systems (DBMS). 
It helps to smooth the impact of I/O speed on query execution latency. 
% A buffer should typically facilitate faster data access for repeated queries. 
A typical task of a buffer manager is to promote/demote replacement of pages in the buffer to serve a workload most efficiently. 
This is implemented by an eviction strategy that identifies victim pages. 
The strategy must be adaptive or general enough to account for different data access patterns.

% Different page replacement algorithms have been studied in the literature. 
% Most of basic algorithms (e.g. LRU, FIFO, MRU) come from queueing theory \cite{adan2002queueing}. 
% Recent works on eviction strategies are mostly concentrated on the cache management problem. 
% The main difference between the database buffer management problem and the cache management problem is that the latter processes only an indivisible block of memory.
% The database buffer in contrast can manage queries to entire tables not only particular pages.
% Therefore, the well-developed strategies for cache management can not be directly generalized to buffer management.

The principal challenge in adopting novel replacement strategies in a DBMS is related to the severe performance requirements to the buffer manager.
The computational overhead of a replacement strategy must be kept very small to avoid a cumulative negative effect on workload latency. 
Also, as I/O becomes cheaper on new storage devices (beginning from SSD), the benefit of highly accurate (in terms of the buffer hit ratio) but computationally expensive page replacement strategies (in terms of the total query latency) becomes limited. 
%This issue is crucial for high-load regimes, in which the query latency becomes closer to the maximum feasible latency. 
The trade-off between the accuracy and computational overhead is often resolved in favor of heuristics-based policies, which behave reasonably well only for some typical data access patterns. 
Many such heuristics are based on certain properties (e.g., temporal or spatial) of requested pages.

The problem of designing an optimal eviction strategy for a DBMS is similar to the CPU cache management problem in which the operation system calls to the memory play the same role as queries to the database.
In CPU caching, a page is a natural object of reasoning, and due to program variability, it is hard to consider higher-level objects to be used in heuristics. 
In contrast, in DB buffer management, the range of programs (i.e., query execution plans) and subroutines (physical operators) is limited. 
Thus, different types of data granularity can be considered for the design of replacement policies.  
Plan operators (such as, e.g., sequential or index scan) define typical page request patterns, while sequences of repeated queries determine patterns of accessed tables. 
Based on this observation, we argue in this paper that novel page replacement strategies for a DB buffer manager can be developed with the help of lightweight models that capture data access patterns on different levels of granularity, including page, operator, and table/query levels. The main contributions of our paper are as follows.
\begin{itemize}
    \item Based on the framework of expert-based algorithms, we propose a novel expert-based page replacement algorithm (EEvA) for DB buffer manager.
    We provide heuristic-based instances of EEvA, namely EEvA-Greedy, EEvA-Seq, and EEvA-T, for different application scenarios. 
    \item We show a relation between optimal page replacement and online convex optimization by representing changes in buffer states as a specific Markov Decision Process.
    As an application of this result, we establish regret bounds for EEvA algorithm.
    \item We implement EEvA-based algorithms in a synthetic experimental environment to emulate different access patterns and show that they outperform relevant competitors regarding hit ratio and latency. Also, we implement an instance of EEvA in an open-source DB engine and show that it provides better hit rates and higher transaction counts on the TPC-C benchmark.
\end{itemize}
% \begin{itemize}
%     \item Expert-based eviction algorithm (EEvA). EEvA is a randomized eviction algorithm that utilizes prediction with an expert advice template from OCO with an individual page as an expert. 
%     This algorithm is designed to handle the worst-case individual sequence of queries. 
%     \item EEvA-Greedy is a simplified version of EEvA. 
%     We show that these algorithms outperform standard algorithms in expectation.
%     \item EEvA-T is an algorithm for the cache management setting. 
%     It can be used for the buffer management problem if pages are changed too fast and their number is very huge. 
%     In this case, tracking page weights becomes meaningless and only table weights are tractable.
%     \item EEvA-Seq is a heuristic algorithm based on EEvA with high-load regimes as a main goal. 
% \end{itemize}

\section{Related works}\label{sec::related}
We begin our exposition with a brief summary on the main types of replacement strategies used for memory management and highlight their benefits and limitations.

\subsection{FIFO-based strategies}
\label{subsec::fifo}
In the well-known FIFO-based approach~\cite{yang2023fifo}, a victim page is found based on the FIFO order of processed pages. A quantitative method for FIFO cache analysis is provided in \cite{guan2013fifo}.
QD-LP algorithm~\cite{yang2023fifo_better_lru} extends this method and suggests enriching eviction strategies with lazy promotion and quick demotion heuristics.
The authors prove the importance of such heuristics with an extensive numerical simulation based on long query traces from real databases. 
The basic version of FIFO algorithm is easy to implement and affordable from the computational point of view, but it does not adapt to data access patterns.

\subsection{Recency-based strategies}
\label{subsec::LRU}
In this approach, the decision to label a page as a victim is made based on the analysis of page usage recency.
The well-known implementation of this approach is LRU algorithm~\cite{touzeau2019fast,mattson1970evaluation}, which labels the least recently used page as a victim.
Slightly modified versions of LRU  add flexibility via an early eviction strategy (EELRU algorithm~\cite{smaragdakis1999eelru}) and hierarchical processing of clean and dirty pages (CFLRU~\cite{park2006cflru} and CFDC~\cite{ou2009cfdc} algorithms). 
The basic version of LRU algorithm is easy to implement and affordable for medium-sized DBs.
However, it is not adaptive, can perform badly even in very simple scenarios, and can not differentiate data access types. 

\subsection{Frequency-based strategies}
\label{subsec::LFU}
Another well-known approach is frequency-based.
Classical algorithms here are LRU-k~\cite{o1993lru} and LFU~\cite{karedla1994caching,effelsberg1984principles}.
These algorithms determine victim pages based on the simple frequency statistics of page utilization.
To avoid the overhead of LRU-k, in 2Q algorithm~\cite{johnson19942q}, two queues based on the frequency of page requests were used.
The recent modification of LFU is tinyLFU algorithm~\cite{einziger2017tinylfu}, which addresses issues of a skewed access distribution and builds upon Bloom filter theory.
However, this family of algorithms shares the same strong and weak points with LRU.
% Most recently used (MRU), MRU-C~\cite{yu2019hierarchical}  
% The forth approach is hybrid, i.e.  (FBR~\cite{robinson1990data}, LRFU~\cite{lee1999existence}, (D)LIRS~\cite{jiang2002lirs}, ARC~\cite{megiddo2003arc}, LBR), 

\subsection{Randomized strategies}
\label{subsec::rand}
Randomized approaches have been used to improve the performance of eviction algorithms.
For example, OnlineMIN~\cite{brodal2015onlinemin} has $\mathcal{O}(\log k)$ worst-case time complexity, where~$k$ is a cache size.
MARKER algorithm~\cite{fiat1991competitive} introduces randomization in LRU to achieve logarithmic competitiveness with respect to the optimal algorithm. 
Other approaches~\cite{mcgeoch1991strongly,bansal2012primal, coester2022competitive} offer promising theoretical guarantees, but their performance can vary significantly in real-world scenarios due to the inherent unpredictability and computation overhead of these algorithms.
% PARTITION, EQUITABLE,
% The last but not least approach to constructing eviction algorithms is to aggregate some basic policies optimally.
% For example, MyCache~\cite{yuan2022personalized} takes 
% % \todo[inline]{AK: add details on e LRU-k~\cite{o1993lru} and LFU~\cite{todo}}
% Other examples of such algorithms are LeCAR~\cite{vietri2018driving} and  CACHEUS~\cite{rodriguez2021learning}, described in detail below. 

\subsection{Machine Learning-based strategies}
\label{subsec::ML}
Recently, many ML-based cache and buffer management strategies have been proposed (see, for example, the recent survey~\cite{liu2022survey}). 
The main idea behind these approaches is to equip a buffer manager with an ML oracle that helps to learn data patterns to achieve desirable metrics, such as page hit rate or latency~\cite{boyar2017online}. 
In~\cite{lykouris2021competitive}, the authors propose an online algorithm with a black-box ML predictor for caching.
A construction of the optimal oracle has been first proposed in~\cite{belady1966study}, where a theoretical analysis has been presented.
This optimal oracle exploits information about future inputs, so it has mostly a theoretical value.
However, in~\cite{wu2022survey}, the authors proposed a method based on gradient boosting to approximate Belady's algorithm. 
Unfortunately, the practical value of ML-based replacement strategies is limited due to the excessive computational overhead.

\subsection{Online Convex Optimization in eviction strategies}
\label{subsec::OCO}
Another important line of research considers eviction strategies as a sequential decision process and uses Online Convex Optimization (OCO) to learn an optimal eviction policy. 
Our work follows this approach. 
In~\cite{paschos2019learning}, the authors proposed a caching algorithm based on a straightforward application of online gradient descent. 
The main problem with this approach is that it requires an ML oracle (that can be computationally demanding). 
In LeCAR~\cite{vietri2018driving} and CACHEUS~\cite{rodriguez2021learning}, the authors proposed an idea to treat basic eviction strategies (e.g., LRU, LFU, etc.) as experts. They used OCO-based aggregation of expert predictions to construct composite strategies. 
The main drawback of this approach is that one has to trigger several expert algorithms in parallel upon each (page) request which can imply non-acceptable overhead in real-world scenarios.

In general, common algorithms for page replacement are more practical and computationally efficient. 
However they have a limited ability to adapt, they do not take into account data access types and patterns, and can perform poorly even in simple scenarios. Recently proposed algorithms are more adaptive, but they imply a non-negligible computational overhead. 
In this paper, we propose an approach that offers a compromise between flexibility and performance. 
We show that this approach can provide algorithms that are adaptive to data access patterns and achieve high performance even in heavy-load scenarios.

% \subsection{Our contribution}
% % In this work, we consider the buffer management problem. 
% We propose a new model based on Markov Decision Process (MDP). 

% We also consider new eviction algorithms that focus on the efficient processing of the mixture of \texttt{get}-type and \texttt{scan}-type queries in case, when pages can change over time and their respective weights are meaningless after eviction or can not be stored due to storage limit.
% We refer to the \texttt{scan}-type query such query that does not require a search of the pages based on their indices.
% Such queries are completed much faster than a one-by-one search of the target pages by their indices.

% We propose

\section{Buffer management problem as online learning problem}
Our approach is based on the framework of online learning with experts. 
First, we provide the required background and then discuss how to adopt the online learning framework in developing new page replacement strategies.

\subsection{General expert-based framework}
\label{sec:prediction with expert advice}
Prediction with expert advice is a well-studied field of online learning for sequential decision-making~\cite{cesa2006prediction}. 
The general problem statement can be described as follows.
We have an agent that makes a final prediction based on predictions made by a set of $n$ experts.
The final prediction is composed by means of an auxiliary aggregation rule.
After all expert predictions are made, the ground-truth value is known, and therefore, the value of a loss function can be computed for the agent and all experts. Based on the observed loss, experts may modify internal parameters to adjust subsequent predictions, and the agent may adjust the aggregation rule.

% in discrete time, at each round, an agent observes the predictions of $n$ experts and, after observation, makes its own prediction. 
% When the experts and the agent make their predictions, a loss function is revealed, and all participants suffer losses based on their predictions.

More formally,  let $\mathcal{D}$ be an available decision set, $T$ a time horizon, $f_{i, t}$ the prediction of the $i$-th expert at time $t$ ($1 \leq t \leq T$), and let $l_t(\cdot)$ be the loss function at time $t$. 
The main idea behind the prediction with expert advice is to select a particular expert~$i$ at round~$t$ and take its prediction~$f_{i, t}$ as the agent prediction~$d_t$.
The selection of an expert at round $t$ is typically made in a probabilistic manner, i.e., the $i$-th expert has weight $w_i^t$ and the probability distribution over the experts to be selected at round~$t$ is defined as 
\[
\mathbb{P}(d_t = f_{i, t}) = \frac{w_i^t}{\sum_{j=1}^n w_j^t}.
\]
Here, we assume that experts generate different predictions.
If predictions of some experts coincide, then the expression for the probability can be adjusted.
% if $f_{i, t}$ are discrete-valued functions.
% In both cases, $w_{i}^t$ denotes the weight of the $i$-th expert, and $d_t$ denotes the agent prediction at the round $t$.
% However, further, we focus on the discrete-valued functions $f_{i,t}$.
% The agent prediction $d_t$ is supposed to be almost as good as the prediction made by the best expert, who is unknown beforehand.
The optimal distribution over experts is constructed by accurately tracking expert performance and adjusting the corresponding weights $w_i^t$.

% The weighted prediction $x_t = \sum_{i} w_{i}^t \cdot f_{i, t}$ is usually implemented in a randomized way: the agent chooses the same prediction $f_{i,t}$ as expert $i$ at time $t$ with probability $w_i^t / \sum_j w_j^t$.
% сейчас про то как лосс ввести
% нам нужен лосс - $L_t(x_t) = \mathbb{E}_{d_t \sim \mathcal{X}_t}[l_t(d_t)] = \langle x_t, g_t \rangle$
To introduce a loss function that can be used for tuning weights, we consider the vector of probabilities $x_t \in \mathbb{R}^n$ such that $[x_t]_i = \frac{w_i^t}{\sum_{j=1}^n w_j^t}$.
% In other words, the $i$-th element of the vector $x_t$ stores the probability of the $i$-th expert prediction to be used as the agent prediction. 
Then the loss $\mathcal{L}_t$ at the $t$-th round can be estimated as 
\begin{equation}
\mathcal{L}_t(x_t) = \mathbb{E}_{d_t}[l_t(d_t)] = \langle x_t, g_t \rangle,
\label{eq::total_loss}
\end{equation}
where $g_t \in \mathbb{R}^n$ is a vector of expert losses, i.e., $[g_{t}]_i = l_t(f_{i,t})$.
The particular form of $l_t$ is discussed in the further sections.

% In this case, the agent's decision $x_t$ can be modeled as a vector in $\mathbb{R}^n$ with the $i$-th component $x_{i,t} = w_i^t / \sum_j w_j^t$, and the loss function for the agent corresponds to the weighted average of the experts' losses $l_t(x_t) = \mathbb{E}_{i \sim x_t}[l_t(f_{i,t})] = \langle x_t, g_t \rangle$, where $g_t \in \mathbb{R}^n$ is the vector of experts' losses, i.e., $g_{i, t} = l_t(f_{i,t})$. 

% l_t - косты от выбора эксперта, в нашем случае будет три значния - альфа, бета, гамма
% 
The main performance metric used in prediction with expert advice setting is regret defined as
\[
\min_{u\in \Delta_+^n} \left \{R_T(u)\stackrel{\mathclap{\text{def}}}{=} \sum_{t=1}^T \left (l_t(x_t) - l_t(u)\right) \right \},
\]
where $\Delta_+^n = \{u\in \mathbb{R}^n \mid \sum_{i=1}^n u_i = 1, \quad u_i \geq 0 \}$ is the probability simplex.
State-of-the-art algorithms for prediction with expert advice and their modifications are based on the online mirror descent~\cite{hazan2016introduction}. 
The simplest version of online mirror descent is given in Algorithm~\ref{alg:online_md} below.

\begin{algorithm}[!ht]
\caption{ Online mirror descent  }\label{alg:online_md}
\begin{algorithmic}[1]
\REQUIRE{learning rate $\mu > 0$, time horizon $T$}
    \STATE Initialize experts' weights $w_i^0=1$, for all $i = 1, \dots, n$
    \FOR{$t = 1, \ldots, T$}  
        \STATE Sample an expert $j$ according to the distribution \[\mathbb{P}(d_t = f_{j, t}) = \frac{w_j^t}{\sum_{i=1}^n w_i^t}.\]
        \STATE Set the agent prediction $d_t = f_{j,t}$ to the prediction of the $j$-th expert.
        \STATE Compute loss vector $g_t$
        \STATE Update the weights for experts based on the current loss:
        \[
        w_i^{t+1} = w_i^t \cdot \exp[\mu \cdot [g_{t}]_i], \quad i=1,\ldots,n
        \]
    \ENDFOR
\end{algorithmic}
\end{algorithm}

% The simplest version of online mirror descent can be described as follows:
% \begin{itemize}
% \item Initialization: learning rate $\mu$, weights $w_i^0=1$ for each $i = 1, \dots, n$,
% \item for $t=1, \dots, T$:
% \begin{itemize}
% \item Sample expert $i$ with probability $x_i^t = \frac{w_i^t}{\sum_{j=1}^n w_j^t}$ and play~$f_{i,t}$,
% \item Observe loss vector $g_t$,
% \item Update the weight for each expert $i$ based on performance:
%         $$
%         w_i^{t+1} = w_i^t \cdot \exp[\mu \cdot g_{i, t}].
%         $$
%     \end{itemize}
% \end{itemize}

This algorithm converges with the optimal regret rate $O\left(\sqrt{T\log n}\right)$ under mild assumptions~\cite{hazan2016introduction}. 
Online mirror descent is closely related to the follow-the-perturbed leader~\cite{cohen2015following} and can be seen as a randomized version of the greedy expert selection strategy.

Prediction with expert advice can be viewed as a method to construct a single strong predictor based on (an aggregation of) a set of weak predictors. 
Therefore, selecting experts is one of the most important factors directly affecting performance. 
% This framework can be seen as aggregation scheme to construct the strong predictor based on the set of weak predictors.
For example, caching algorithms like LeCAR~\cite{vietri2018driving} and CACHEUS~\cite{rodriguez2021learning} can be given within the experts' advice framework, in which the experts are LFU, LRU, and other simple heuristics, respectively. 
Such experts are called \emph{dynamic} since they tune predictions according to environmental changes.
To obtain a higher gain from aggregation, one usually needs to have a diverse set of experts that make different predictions. 
% Thus, it is desirable to have many different experts. 
On the other hand, the more dynamic experts are used for aggregation, the more computational resources are required in each decision round.
% adds its own space and time complexity. 
%The trade-off between prediction quality and resource limitations is usually solved in favor of the latter.
%Thus, only a few heuristics are used as dynamic experts in production.

Alternatively, one can use \textit{static} experts that are not adaptive to the environment. They are weaker predictors, but they are cheaper to maintain. 
%We provide examples of static experts for the page replacement problem in the next section.
%In this work, we consider the static experts as ingredients for online mirror descent to compute eviction candidates.
%To provide a high hit rate for our eviction algorithm, the experts have to exploit buffer management operations accurately.
% To do so we discuss how to construct an expert-based model that accurately reflects buffer management operations. 
In the next section, we discuss which static experts can be used to build fast and accurate page replacement algorithms.

\subsection{Experts for page replacement}
\label{sec::experts_db_buffer}
One can consider experts corresponding to different levels of data access granularity. %For example, in row store, one can view tuples as experts of fine granularity.
% In databases (at least in relational ones), the simplest independent unit is a tuple. 
%However, if we model each tuple as an expert, the entire procedure becomes intractable for huge databases due to the huge number of tuples and pore scalability of tracking the corresponding statistics.
For example, one can naturally consider every
%Every page is a structure that groups several tuples. 
%With the computational resources available for modern databases, tracking page statistics induces feasible overhead.
\emph{page} as an expert that votes to keep it in the buffer.   
However, this approach faces the obvious problem that pages are mutable. 
If the page content changes, the aggregated statistics based on the previous content can not be directly used. 
Moreover, when evicted from the buffer, a page can be subject to change by a background modification thread (e.g., \emph{vacuum}) of the storage manager. 
Thus, we propose to consider a page based expert as active until the page is evicted from the buffer. 
This requires a special initialization procedure for page weight when placing a page into the buffer. We implement this feature by means of higher-grained experts which correspond to data tables. %First, is a page is evicted from the buffer and then it is requested again, i.e., it requires re-initialization of its weight. 
%The second case is that the page is new and was never put in the buffer before.
%Both cases can be efficiently processed if we introduce support experts via tables.

% having a limited lifespan, which ends when the page is evicted from the buffer. 
% The other issue that arose in this case is how to choose the initial weight for new page that were added to the buffer. 
% In case when pages do not change over time we can simple save page weight after eviction or even add weight as meta-information in the page itself and use it when the page return to buffer. 
% In the case when page "die" after leaving the buffer we should develop some other way to choose initial weights.

% Third and lastly, another entity used in tabular databases is tables. 
%Tables in a database can be considered as a set of pages with some internal logical relations.
The single purpose of table-based experts is to provide support for the initialization of weights of page-based experts. 
%We introduce support experts based on the tables such that they help in the initialization procedure for page-based experts. 
We emphasize that table-based experts do not directly participate in the voting for the aggregated agent's decision. 

Also, taking tables into account allows us to distinguish between types of data requests. 
For example, a request may touch pages from the same or different tables; it can be an index or sequential scan of a particular table.
%Depending on the structure of requests, the optimizer selects the proper type of access to the pages.
%The standard types of access are Online Transaction Processing (OLTP) and Online Analytical Processing (OLAP).
%The OLTP type of access corresponds to the query to a single tuple that can be completed quickly through the scan by index operation.
Typically, a request to get a few tuples from a table is executed by means of an index scan; further in the paper, we refer to these requests as \texttt{get}-type queries. 
%This access type is similar to the one in a cache management problem, where only a single file is requested per OS query.
%\begin{Def}
%Since OLTP access is related to getting the single tuple by index, we refer to the queries initiating this type of access as \texttt{get}-type queries.
%\end{Def}
The cost of processing \texttt{get}-queries is relatively low since searching by an index is based on the internal auxiliary data structure built in advance~\cite{gani2016survey}.
However, if a request touches many tuples, the cost of searching by an index can be rather high. In this case, a sequential scan is preferred and
%In this case, the OLAP type of access becomes more efficient.
%This type of access scans multiple pages from the source based on the query.
%Since the requested pages are grouped with the same source, massive loading from memory becomes faster than the sequential search of the requested pages by index.
even loading redundant pages may not induce a significant overhead. We further refer requests of this kind as \texttt{scan}-type queries.  
We argue that it is important to distinguish between index and sequential data access when building page replacement strategies. 

In the next section, we describe our expert-based framework for the development of page replacement strategies. We formalize the above-introduced experts, describe the corresponding initialization procedure, and show how \texttt{get}-type and \texttt{scan}-type queries can be reflected in the computation of 
experts' weights. 
% , or if the process of data synchronization with an OLAP DB uses the same resources (i.e., buffer) as OLTP users, or, finally, if the DB is actually an OLAP DB, then full and range scans cannot be ignored.
% Thus, we need to adjust the model from subsection~\ref{sec: cache} to take into account the specifics of buffer management.

\section{Buffer Eviction Strategy Based on Expert Model}

%This section presents the expert-based eviction strategy for database buffer management.
We base our framework on the general online optimization setting and adapt the online mirror descent algorithm (see Algorithm~\ref{alg:online_md}) to build a flexible eviction algorithm using static experts.
To show the convergence bound for our algorithm, we introduce a complex but accurate Markov Decision Process (MDP) model of page replacement.

\subsection{Markov Decision Process model for page replacement}
\label{sec:mdp_}

Markov Decision Process (MDP) is a framework to model sequential decision-making in stochastic environments~\cite{feinberg2012handbook}. 
MDP is defined as a tuple $(\mathcal{S}, \mathcal{A}, \mathbf{P}, d_1, \{\mathbf{r}\})$, where $\mathcal{S}$ is a state space, $\mathcal{A}$ an action space, $\mathbf{P}$ a transition matrix (which specifies for each pair of states $s$ and $s'$ and each action $a$ the probability of arriving from state $s$ to state $s'$ after choosing action $a$), an initial distribution $d_1$ over $\mathcal{S}$, and a sequence of reward functions $r_1, r_2, \dots$ .

We may consider buffer states (content) to be the state space $\mathcal{S}$ of MDP. 
Eviction candidates (victim pages) together with the distinguished $\{0\}$ element (no eviction needed) form the action space $\mathcal{A}$. 
Then, the transition matrix is 
\[
\mathbf{P}(\mathcal{B}' \mid \mathcal{B}, a) = \begin{cases}
\sum_{e\in \mathcal{B}} p_e, \quad \mathcal{B}' = \mathcal{B}, a = 0\\
p_e, \quad \mathcal{B}' = \{e\} \cup \mathcal{B} \setminus \{a\}, \quad e \notin \mathcal{B}, \quad a\neq 0\\
0, \quad \text{otherwise}
\end{cases}
\]
where $p_e$ is the probability that page $e$ is queried.
The initial state of the buffer is always empty.

Note that in the page replacement problem, we do not have full control of the next buffer state. 
Indeed, before we choose an action, we receive a request to a page, which must be added to the buffer if it is currently not there. 
We only control eviction. 
On the other hand, if the queried page is already in the buffer, no action is needed. 
Thus, the reward can be modeled as 
\[
r_t (\mathcal{B}, a) = \begin{cases} 
1, \quad a = 0, e \in \mathcal{B}\\
0, \quad \text{otherwise.} \end{cases}
\]
A more detailed reward model will be defined in section~\ref{subsec:upd}.

\subsection{From DB buffer model to expert-based eviction algorithms}

We now formally describe experts and the specification of the online mirror descent algorithm appropriate for building a replacement strategy.
To simplify our notation, we replace experts' weights with the corresponding experts' rewards.
The only difference here is that weights are updated in a multiplicative way, while rewards are updated in an additive way.
Moreover, the loss of every expert $l_t$ from~\eqref{eq::total_loss} equals the negative corresponding reward.
Thus, the weights in Algorithm~\ref{alg:online_md} are updated via the exponent of the learning rate multiplied by the corresponding cumulative rewards.
More formal details are discussed in subsection~\ref{sec::experts_losses_db}.

% We need the following ingredients to construct the eviction strategy based on experts: experts, loss functions, and the weights update rule.

Further, we use the following notation.
Let $\mathcal{T}$ be a set of DB tables.
Assume that the $i$-th table $T_i \in \mathcal{T}$ consists of $P_i$ pages.
Also, denote by $T(j)$ the table that contains the $j$-th page.
Since pages are considered as the main experts in our algorithm, we assign a cumulative reward $r_j > 0$ to the $j$-th page, for all $j=1,\ldots, P$, where $P = \sum_{T_i \in \mathcal{T}} P_i$ is the total number of pages.
Following our proposal from Section~\ref{sec::experts_db_buffer} on using table-based experts as support for page experts, we assign rewards $v_i, i=1,\ldots,|\mathcal{T}|$ to tables for tracking of \texttt{scan}-type queries. 
Let us also denote by $P_T(\mathcal{B}_t)$ the number of pages from table $T$ residing in the buffer at round $t$.

The buffer manager processes the input set of requested pages $\mathcal{Q}_t = \{ i_1, \ldots, i_{m_t} \}$ in the $t$-th round as follows.
Pages from $\mathcal{Q}_t$ are processed sequentially.
If the $i$-th page is already in the buffer, the rewards $r_i$ and $v_{T(i)}$ are updated according to the reward update procedure presented below.
If the $i$-th page is not in the buffer, the buffer manager finds a victim page from the buffer, replaces it with the requested $i$-th page, and updates rewards~$r_i$ and $v_{T(i)}$.
This pipeline is summarized in Algorithm~\ref{alg::pipeline}. %, where we %denote by $t$ the index of a processed page and omit the request-related nested loop to simplify indexing of buffer states.

% переписать без цикла по страницам
\begin{algorithm}[!ht]
\caption{Buffer manager pipeline}
\label{alg::pipeline}
\begin{algorithmic}[1]
\REQUIRE{a set of tables $\mathcal{T}$, a set $\mathcal{B}_t$ of pages stored in the buffer at round $t$ }
    \FOR{$t = 1, 2, \ldots$}  
        \STATE Let the $i$-th page be requested
        % \STATE Receive the set of the queried pages $\mathcal{Q}_t = \{ i_1, \ldots, i_{m_t} \}$ 
        % \FOR {$j = 1,\ldots, m_t$}
            \IF {$i \not\in \mathcal{B}_t$}
            % \STATE Identify whether 
            \STATE Use a replacement strategy to select a victim page $e \in \mathcal{B}_t$
            \STATE Replace page $e$ by the requested $i$-th page
            \STATE Reset reward of the victim page: $r_{e} := 0$
            \ENDIF
            \STATE Update rewards of the $i$-th page and the $T(i)$-th table
        % \ENDFOR
    \ENDFOR
\end{algorithmic}
\end{algorithm}

In the following, we provide a detailed description of the proposed algorithms for page replacement and reward update.

\subsubsection{Experts and losses}
\label{sec::experts_losses_db}
As discussed in Section~\ref{sec::experts_db_buffer}, we consider pages as primary experts and tables as support experts that are used for the initialization of page rewards.
We show how the approach is aligned with the online mirror descent framework.

% So far, we have identified three candidates for the expert role: tuple, page, and table. Each candidate has its strengths and weaknesses:
% \begin{itemize}
% \item A tuple is the most accurate entity to use as an expert for an OLTP setup but cannot be implemented due to excessive computational overhead.
% \item Pages can be implemented and serve as good aggregators for tuples but change too often, sometimes in untraceable ways. They are used as a unit of storage in the buffer.
% \item Tables are stable and tractable but not very informative and not fitted to buffer structure.
% \end{itemize}

A straightforward way to model replacement via experts is to use a \textit{residual buffer} after a page $e$ is evicted: $\mathcal{B}_{-e} = \mathcal{B} \setminus \{e\}$. 
% This framework logically connects rewards in experts model with real life. 
Let $\mathcal{B}$ be the current buffer state, $e$ a victim page, $L_t(e)$ an accumulated reward  (e.g., hit rate or other metric) for page $e$, and let $L_t(\mathcal{B}) = \sum_{i \in \mathcal{B}} L_t(i)$ be the sum of accumulated rewards for pages in buffer $\mathcal{B}$. 
Accumulated rewards for the buffer serve as a proxy for state values in the corresponding MDP process.

The standard policy to solve the corresponding MDP is to maximize the next state value. 
If we consider residual buffer $\mathcal{B}_{e}$ after the eviction of page $e$ as an expert, this greedy policy coincides with tracking the best expert. 
Online Mirror Decent algorithm given in Section~\ref{sec:prediction with expert advice} solves a regularized version of this problem, which is known as 'follow the perturbed leader'. 
To implement this, the algorithm samples an expert with probability $p_{\mathcal{B}_e} \sim w_{\mathcal{B}_e} = w_0 \cdot \exp \{\mu \cdot (L(\mathcal{B}) - L(e)) \} \sim w_0 \cdot \exp \{- \mu \cdot L(e) \}$. 
Hence, the algorithm provides the same result as if we used page $e$ as an expert with weight $w_e = w_0 \cdot \exp \{- \mu \cdot L(e) \}$ to represent the residual buffer.

% If we do not plan to use an additional data structure for tuples, like in~\cite{einziger2017tinylfu}, then tuples can be disregarded.

% Taking this into account, we propose to use pages as the main experts but take tables as supportive experts responsible for initial weights when a new page is added to the buffer.

\subsubsection{Update rule for rewards}
\label{subsec:upd}
An important ingredient in identifying a victim page is the reward update procedure.
The main features of this procedure are the proper initialization of page rewards and an update scheme based on query frequencies and query types. Page rewards are updated differently for \texttt{scan}- and \texttt{get}-types of queries as formalized by update coefficients $\alpha > 0$ and $\beta > 0$ in Algorithm~\ref{alg::update_scheme} below.
If the query is \texttt{get}-type, then $\alpha$ is used to update the page and table rewards, otherwise $\beta$ is used.
The interpretation of the introduced coefficients $\alpha$ and $\beta$ is explained further in section~\ref{sec::cost_model}.

Then we can model the reward (introduced in section~\ref{sec:mdp_}) as 
\[
r_t (\mathcal{B}_t, q_t) = \sum_{e \in \mathcal{B}} r_t(e, q_t) \mathbb{I}[q_t \in \mathcal{B}_t],
\]
where $q_t$ is a page requested at round $t$ by a query $\mathcal{Q}$ 
\[
r_t(e, q_t) = \begin{cases} 
\alpha, \quad e=q_t \in \mathcal{B}_t, \text{$\mathcal{Q}$ is of \textit{get} type},\\
\beta, \quad e=q_t \in \mathcal{B}_t, \text{$\mathcal{Q}$ is of  \textit{scan} type},\\
0, \quad \text{otherwise.} \end{cases}
\]
Then the accumulated reward for a page $e$ is $L_t(e) = \sum_{s=1}^t r_t(e, q_s)$. 
% In addition, note that if $|\mathcal{Q}_k| = 1$ in the $k$-th round in Algorithm~\ref{alg::update_scheme}, it means that buffer management receives get-type query.
\begin{algorithm}[!ht]
\caption{Reward update procedure for given query}
\label{alg::update_scheme}
\begin{algorithmic}[1]
\REQUIRE{a set of pages from buffer $\mathcal{B}$, update coefficients $\alpha > 0$ (for \texttt{get}-type query) and $\beta > 0$ (for \texttt{scan}-type query), rewards  $r_j$ and $v_i$ for pages and tables, respectively, integer index~$t$, a requested page index~$i_j$, the type of page request $R$ (\texttt{get}/\texttt{scan}-type query), the total number of DB tables $|\mathcal{T}|$}
    \IF{$t = 1$}
    \STATE Initialize tables' rewards as $v_i := 0$, for $i=1,\ldots,|\mathcal{T}|$.
    \ENDIF
    % \STATE Assign $m_k := |\mathcal{Q}_k|$
        % \STATE Receive the set of the queried pages $\mathcal{Q}_k = \{ i_1, \ldots, i_{m_k} \}$
        \IF {$R$ is \texttt{get}-type}
        \STATE set $\delta := \alpha$
        \ELSE
        \STATE set $\delta := \beta$
        \ENDIF
        \IF {$i_j \in \mathcal{B}$}
            \STATE Update reward of the requested page as $r_{i_j} := r_{i_j} + \delta$
            \ELSE
            \STATE Initialize page reward $r_{i_j} := v_{T(i_j)}$.
            \STATE Update page reward $r_{i_j} := r_{i_j} + \delta$
        \ENDIF
        \IF{$R$ is \texttt{get}-type}
        \STATE Update the reward of the table containing the queried page: $v_{T(i_j)} := v_{T(i_j)} + \frac{\delta}{P_{T(i_j)}(\mathcal{B})}$
        \ELSE 
        \STATE Update the reward of the table containing the queried page: $v_{T(i_j)} := v_{T(i_j)} + \delta$
        \ENDIF
\end{algorithmic}
\end{algorithm}

\subsubsection{Queries cost modeling}
\label{sec::cost_model}
In the previous section, we introduced coefficients $\alpha$ and $\beta$ to update page rewards on \texttt{get}-type and \texttt{scan}-type queries, respectively.
We now show how these coefficients relate to the cost of the corresponding type of query.
% Let us illustrate the relation between pages' rewards and costs. 
Note that
\begin{equation*}
    r_i^t = r_{initial} + \alpha \sum_{s=1}^t \mathbb{I}_{(idx(s) = i)}+\beta \sum_{s=1}^t \mathbb{I}_{(scan(s) = i)}
\end{equation*}
and hence,
\begin{equation*}
    \lim_{t\rightarrow \infty}\frac{r_i^t}{t} = \alpha \mathbb{P}[idx = i] + \beta \mathbb{P}[scan = i],
\end{equation*}
where $\mathbb{P}[idx = i]$ and $\mathbb{P}[scan = i]$ are the probabilities of events that page $i$ is  requested by \texttt{get}-type and \texttt{scan}-type query, respectively.
Thus, if $\alpha$ and $\beta$ accurately represent the costs of \texttt{get}-type and \texttt{scan}-type queries, then $r_i^t$ represents the cumulative reward that we get when keeping the page $i$ in the buffer till round $t$.
Hence, $\frac{r_i^t}{t}$ can be viewed as the average reward (saved cost) produced by the page that stays in the buffer.

% Let's examine the weight update rule in online mirror descent from subsection~\ref{sec:prediction with expert advice}.
% It is easy to see that the weights $w_i^t = w_i^0 \cdot \exp \left [\mu \sum_{s=1}^t g_{i, s} \right ]$, i.e. the weights are based on cumulative loss of expert. 

% The rewards update procedure presented in Algorithm~\ref{alg::update_scheme} is used to represent the cumulative reward for each page accurately. 

\subsubsection{Estimation of request costs}
To properly initialize coefficients $\alpha$ and $\beta$ from the reward update procedure, we need to estimate the costs of the considered types of queries.
If we assume that a B-tree index search is used for a \texttt{get}-type query, then we can estimate the cost of such query as $\mathcal{O}(\log P_i)$, where $P_i$ is the number of pages in the $i$-th table.
% To make it more general, we consider range scan. 
% Full scan can be seen as range scan with maximal size of range for corresponding table.
% The index scan for table $T$, if we assume that B-tree index is used, takes $O(\log (T))$ time to compute. 

Now consider a \texttt{scan}-type query of size $r$, i.e. we take $r$ pages from the storage, starting from the $j$-th page of the $i$-th table.
Then, the search for the $j$-th page costs $O(\log(P_i))$ and the cost of loading the remaining pages is $O(r)$.
% Then, the pulling from the storage all pages from table $T$ starting with some page index $i$ takes $O(\log(T))$ to find page with index $i$ (or corresponding tuple, actually) and $O(r)$ to get all other pages. 
% For range scan with size $r$, when manager pull from storage  
If only $a$ pages are required, then
a \texttt{scan}-type query can be cheaper than the \texttt{get}-type provided $c_{idx} \cdot \log(T) + r \leq a\cdot c_{idx} \cdot \log(T),$ where $c_{idx}$ is the base cost for loading a page. In this case, it is cheaper to load additional pages instead of performing a sequence of index access operations.
%However, we should carefully track the impact of loading redundant pages during scan operation on the total complexity.
% since some pages can get "hit" via range scan operation by chance, we should discount this appropriately. 

To simplify the formal analysis, we make an assumption that the loading cost $c_{load}$ for every page in a table $T$ is the same and the cost of index search is $c_{idx} \cdot \log(P_T)$ for B-tree indexes. 
We also introduce false hit rate $\gamma \in [0, 1]$, which represents the probability that a page added to the buffer via the scan operation was actually not needed.
Then, the $i$-th page for a table $T$ at round $t$ induces the following cumulative cost: 
\begin{itemize}
    \item $c_{idx} \cdot \log(P_T)$ if page is requested by a \texttt{get}-type query,
    \item $\frac{c_{idx}\cdot \log(P_T)}{r} + 1 + \gamma \cdot c_{load}$ if the page is requested by a \texttt{scan}-type query of size $r$ and with probability $\gamma$ is not actually required
\end{itemize}
% In the case of a full scan, we can take $r=T$.
For the reward update procedure described in section~\ref{subsec:upd}, the estimated cost for \texttt{get}-type query and \texttt{scan}-type query should be used as the value of $\alpha$ and $\beta$, respectively.

\subsection{Expert-based eviction algorithms}

We now describe the remaining ingredient of Algorithm~\ref{alg::pipeline}, which is the strategy to find a victim page. 
%This section proposes a few algorithms for choosing a victim page based on the experts and the rewards introduced above.

% \subsection{EEvA1}
\paragraph{EEvA}
Our first algorithm, named \texttt{EEvA}, is a probabilistic strategy based on the Online Mirror Descent. 
Active pages are primary experts, and table-based experts provide warm-up support to initialize page rewards.

The strategy picks a victim page $e \in \mathcal{B}$ by sampling pages from $\mathcal{B}$ according to the distribution 
\begin{equation}
    p_i = \frac{e^{-\mu\cdot w_i}}{\sum_{j \in \mathcal{B}} e^{-\mu\cdot w_j}},
    \label{eq::prob_eviction}
\end{equation}
where $\mu > 0$ is a given learning rate. 
% Algorithm~\ref{alg::eeva} summarizes the EEvA algorithm.

% \begin{algorithm}[H]
% \caption{Expert-based Eviction Algorithm (EEvA)}
% \label{alg::eeva}
% \begin{algorithmic}[1]
% \REQUIRE{Buffer $\mathcal{B}$, learning rate $\mu > 0$}
% \ENSURE{Eviction decision}
% % \STATE Initialize page weights $w_i = w_{initial}$ for each page $i \in \mathcal{B}$
% \FOR{each eviction decision}
%     \STATE Compute probabilities for each page $i \in \mathcal{B}$:
%     \[
%     p_i = \frac{e^{-\mu \cdot w_i}}{\sum_{j \in \mathcal{B}} e^{-\mu \cdot w_j}}
%     \]
%     \STATE Sample a victim page from $\mathcal{B}$ according to probabilities $\{p_i\}$
%     \STATE Update weights based on observed losses using update rules from subsection~\ref{subsec:upd}
% \ENDFOR
% \end{algorithmic}
% \end{algorithm}

\begin{theorem}
    Let $\mu = \sqrt{\frac{8}{T} \log T}$, $0<\beta<\alpha\leq 1$, and EEvA algorithm outputs a sequence $\{x_t\}_{t=1}^T$. Then the following inequality holds:
    \[
    \frac{\mathbb{E}[\sum_{t=1}^T \left (l_t(x_t) - \min_{u_t \in \mathcal{B}_t} l_t(u_t) \right )]}{T} \leq 2\sqrt{\frac{2\log T}{T}}.
    \]
\end{theorem}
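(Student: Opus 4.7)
The plan is to recognize EEvA as a direct instance of the online mirror descent / Hedge algorithm from Algorithm~\ref{alg:online_md}, read off from the sampling rule~\eqref{eq::prob_eviction}: the probability $p_i\propto e^{-\mu w_i}$ is exactly the exponential-weights distribution over experts (pages in the current buffer), with $w_i$ playing the role of cumulative loss. Once this identification is made, the theorem becomes a statement of the standard Hedge regret bound, specialized to the page-replacement MDP introduced in Section~\ref{sec:mdp_}, and the proof reduces to verifying the hypotheses of that bound and then substituting the stated value of $\mu$.

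Next, I would verify boundedness of the losses: since the per-round update adds either $\alpha$ or $\beta$ with $0<\beta<\alpha\leq 1$, the reward/loss contribution of any page in any single round lies in $[0,1]$. Hoeffding's lemma then gives the usual second-order control on the log-moment-generating function, and the textbook potential argument for Hedge (tracking $\Phi_t=\log\sum_i w_i^t$ and telescoping) yields
\[
\sum_{t=1}^T \langle x_t,g_t\rangle - \min_{i}\sum_{t=1}^T [g_t]_i \;\leq\; \frac{\log n}{\mu} + \frac{\mu T}{8},
\]
where $n$ is an upper bound on the number of experts active during the run. After at most $T$ queries, at most $T$ distinct pages have ever been placed in the buffer, so one can take $n\leq T$. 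Substituting $\mu=\sqrt{8\log T/T}$ balances the two terms, giving total expected regret $O(\sqrt{T\log T})$; dividing by $T$ produces a rate of order $\sqrt{\log T/T}$ that matches the claim up to the constant $2\sqrt{2}$.

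The main obstacle is that both the support of $x_t$ and the comparator set $\mathcal{B}_t$ change with $t$, whereas the textbook Hedge proof assumes a fixed expert pool and a static best-expert comparator. I would handle this in two steps: (i) embed every page ever seen into a single ambient simplex of size $n\leq T$, assigning loss $0$ to any page that is not currently in the buffer so that $l_t$ naturally extends; and (ii) for the shifting comparator $\min_{u_t\in\mathcal{B}_t}l_t(u_t)$, observe that when a page $u_t$ is in $\mathcal{B}_t$ its cumulative loss in the ambient embedding coincides with the intra-buffer cumulative loss since insertion, because the reset step in Algorithm~\ref{alg::pipeline} (line~6) brings the evicted page's weight back to the baseline. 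This reset is the delicate point: it prevents the Bregman-divergence telescoping term from blowing up when the support of $x_t$ shrinks, but it also explains why the constant in front of $\sqrt{\log T/T}$ comes out a small factor worse than the tight $\sqrt{1/2}$ of vanilla Hedge, yielding the stated $2\sqrt{2\log T/T}$.

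Finally, I would wrap up by plugging the chosen $\mu$ into the balanced bound, taking expectation over the internal randomness of the sampling step (which makes $\mathbb{E}[l_t(x_t)]=\langle x_t,g_t\rangle$ by construction of Algorithm~\ref{alg::pipeline}), and dividing by $T$. The only calculation of substance is verifying that the combined contribution of the reset events does not exceed an additional $\log T/\mu$ term, which is absorbed into the stated constant. Everything else is mechanical bookkeeping of the Hedge potential argument applied to the MDP model of Section~\ref{sec:mdp_}.
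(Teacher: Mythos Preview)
Your proposal and the paper's proof share the same skeleton: identify EEvA's sampling rule~\eqref{eq::prob_eviction} as Hedge, observe that at most $T$ distinct pages can appear over the horizon so the expert count is $n\leq T$, invoke the standard $2\sqrt{2T\log n}$ regret guarantee (the paper cites Corollary~5.7 of Hazan), and substitute $\mu=\sqrt{8\log T/T}$. Where you diverge is in how the time-varying buffer and comparator are handled. The paper \emph{conditions on the buffer state}: it inserts $\sum_{\mathcal{B}\in\mathcal{S}}\mathbb{P}[\mathcal{B}_t=\mathcal{B}]$, swaps the sum over $t$ with the sum over $\mathcal{B}$, applies the static Hedge bound once per fixed $\mathcal{B}$ against the post-hoc best candidate $a^*(\mathcal{B})=\arg\min_{e\in\mathcal{B}}L_T(e)$, and then averages out using $\sum_{\mathcal{B}}\mathbb{P}[\mathcal{B}_t=\mathcal{B}]=1$. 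You instead embed all pages ever seen into a single ambient simplex of size at most $T$, zero-extend the losses outside the current buffer, and argue that the reset in Algorithm~\ref{alg::pipeline} keeps the potential telescoping intact. Your route is more explicit about why the shifting comparator $\min_{u_t\in\mathcal{B}_t}l_t(u_t)$ is controlled and forces you to confront the resets directly; the paper's route is shorter but leans on a sum interchange whose $t$-dependence of $\mathbb{P}[\mathcal{B}_t=\mathcal{B}]$ is not spelled out. One remark: the extra $\log T/\mu$ term you budget for reset events is not needed to reach the stated constant $2\sqrt{2}$ --- that constant is already exactly the one delivered by Hazan's corollary with $n=T$, so the additional bookkeeping you sketch there is superfluous.
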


\begin{proof}
We can only add one new page at each round. 
Thus, the total number of experts one can encounter in horizon $T$ is limited by $T$.
Denote by $a^*(\mathcal{B}) = \arg \min_{e \in \mathcal{B}} L_T(e)$ the post-hoc best eviction candidate. 
Suppose that there were no evictions prior to round~$T$. Then, the probability distribution on eviction candidates produced by our algorithm is the same as the one produced by Algorithm~\ref{alg:online_md}. 
Hence,  
\[
\mathbb{E}\left [\sum_{t=1}^T \left (l_t(x_t) - \min_{u_t \in \mathcal{B}_t} l_t(u_t) \right )\right ] \leq 
\]
\[
\mathbb{E}\left [\sum_{t=1}^T \sum_{\mathcal{B} \in \mathcal{S}} \mathbb{P} [\mathcal{B}_t = \mathcal{B}] \cdot \left (l_t(x_t) - r_t(a^*(\mathcal{B}, q_t) 
 \right )\right ]=
\]
\[
\mathbb{E}\left [\sum_{\mathcal{B} \in \mathcal{S}} \mathbb{P} [\mathcal{B}_t = \mathcal{B}] \cdot\sum_{t=1}^T  \left (l_t(x_t) - r_t(a^*(\mathcal{B}, q_t) 
 \right )\right ] \leq
\]
\[
\mathbb{E}\left [\sum_{\mathcal{B} \in \mathcal{S}} \mathbb{P} [\mathcal{B}_t = \mathcal{B}] \cdot 2\sqrt{2T \log T} \right ] \leq 2\sqrt{2T \log T}
\]
with respect to $2\sqrt{2T\log n}$ regret rate for Algorithm~\ref{alg:online_md} (Corollary 5.7. in \cite{hazan2016introduction}).
\end{proof}
This theorem states that expected costs for EEvA are not higher than for ex-post optimal eviction strategy plus $\sqrt{T \log T}$.

\paragraph{EEvA-Greedy.}
An alternative strategy is to replace sampling with selecting a page $e$, which has the smallest weight:
\begin{equation}
    e = \argmin_{j \in \mathcal{B}} w_j.
\end{equation}
This approach is similar to the greedy search; therefore, we refer to it as \texttt{EEvA-Greedy}.

% \begin{algorithm}[!ht]
% \caption{ EEvA1  }\label{alg:EEvA}
% \begin{algorithmic}[1]
% \REQUIRE{Set of tables $\mathcal{T}$, stepsize $\mu > 0$}
%     \STATE Initiate table weights $w_{T}^1 = 0$ for all $T \in \mathcal{T}$.
    
%     \FOR{$t = 0, 1, \dots$}  
%         \STATE Check buffer for a free spot. 
%         \IF{there exists a free spot in a buffer}
%         \STATE Use this spot for the query and proceed to the next query
%         \ELSE 
%         \STATE Sample active page $i$ as eviction page with probability $p_i = \frac{e^{-\mu\cdot w_i^t}}{\sum_j e^{-\mu\cdot w_j^t}}$.
%         \ENDIF
%         \STATE Update weights for survived pages and tables according to rules in Section~\ref{subsec:upd}
%     \ENDFOR
% \end{algorithmic}
% \end{algorithm}

\paragraph{EEvA-T}
The next strategy \texttt{EEvA-T} that we consider is the simplified version of \texttt{EEvA-Greedy} where only table rewards are used to identify a victim page.
Namely, the first page (from the sequence of buffer pages) is chosen as a victim, for which the original table has the smallest reward $v_i$.
For example, this strategy can be used in case of a large buffer when tracking the rewards of each active page is too expensive. 
%Another reasonable scenario is the read-only access to pages, and their number is reasonably small. 
%In this case, each page can be considered as an independent table.

Denote by $\mathcal{B}_t$ the buffer state (i.e., the current set of pages in the buffer) at round $t$. 
Set $\mathcal{B}_t^{-1} = \mathcal{B}_t\setminus \{\argmin_{e \in \mathcal{B}_t} w_e\}$, i.e., the set of pages from $\mathcal{B}_t$ excluding the page with minimal weight. 
Then consider the following scenario: page content does not change over time, each page is considered as an independent table, and its reward is saved even after eviction from the buffer. Assume also that there is only one type of query (\texttt{scan}-type or \texttt{get}-type).
This scenario brings the buffer page replacement problem very close to the cache management problem.
In particular, for this scenario, the following theorem holds.

\begin{theorem}
Let $\mathcal{B}_t$ be the sequence of buffer states produced by \texttt{EEvA-T} algorithm. Then, it holds that 
\[
\lim_{t\rightarrow \infty} \mathcal{B}_t^{-1} = \mathcal{B}_{opt}^{-1},
\]
where $\mathcal{B}_{opt}$ is the optimal buffer state. 
\end{theorem}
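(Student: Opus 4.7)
My plan is to reduce \texttt{EEvA-T} under the simplifying assumptions (static content, page $=$ independent table, persistent rewards, single query type) to a frequency-tracking rule and then invoke the strong law of large numbers. Since every page is its own table, the table weight $v_i$ coincides with the page weight $w_i$, and under a single query type (coefficient $\delta \in \{\alpha,\beta\}$), Algorithm~\ref{alg::update_scheme} reduces to $w_i^t = w_i^0 + \delta\, N_i(t)$, where $N_i(t)$ is the number of requests to page $i$ up to round $t$, regardless of whether $i$ is currently in $\mathcal{B}_t$ (rewards are preserved on eviction). Letting $p_i = \mathbb{P}[q_t = i]$, the SLLN gives
\[
\frac{w_i^t}{t} \xrightarrow{a.s.} \delta\, p_i \quad\text{for every page } i.
\]

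Assume first that the $p_i$'s are all distinct (the tie case can be handled by an arbitrary deterministic tie-breaker, after which the argument is identical), and relabel so that $p_1 > p_2 > \cdots > p_n$. Define $\mathcal{B}_{opt}$ as the set of the top-$k$ pages $\{1,\ldots,k\}$ by access probability, which is the optimal stationary cache under a single query type (any replacement by a lower-probability page strictly decreases the expected hit rate, hence the total reward). By the above convergence there exists an almost surely finite time $T^\star$ after which the ordering of $\{w_i^t\}$ agrees with the ordering of $\{p_i\}$ on the pages in the top-$(k-1)$ class versus all others: for every $i \le k-1$ and every $j \ge k$ we have $w_i^t > w_j^t$ for all $t \geq T^\star$.

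Now I would argue that, past $T^\star$, the top-$(k-1)$ pages become absorbing in the buffer. Since each page $i \le k-1$ is requested infinitely often (because $p_i > 0$), each eventually enters $\mathcal{B}_t$ after $T^\star$. Once inside, $i$ cannot be the $\arg\min$ weight of $\mathcal{B}_t$: any competing buffer inhabitant is either another top-$(k-1)$ page with comparable reward or a page $j \ge k$ with $w_j^t < w_i^t$, so \texttt{EEvA-T} always evicts outside the top-$(k-1)$ class. Hence after some almost surely finite random time $T^{\star\star} \ge T^\star$, the inclusion $\{1,\ldots,k-1\} \subseteq \mathcal{B}_t$ holds for all subsequent $t$. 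Since $\mathcal{B}_t^{-1}$ removes exactly the minimum-weight element, which lies outside $\{1,\ldots,k-1\}$, we obtain $\mathcal{B}_t^{-1} = \{1,\ldots,k-1\} = \mathcal{B}_{opt}^{-1}$ for all $t \geq T^{\star\star}$, proving the claim.

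The main obstacle I expect is the absorption step: one must rule out a pathological scenario in which a top page $i$ enters the buffer before $T^\star$, gets evicted early while its empirical reward is still noisy, and then a slow interleaving of requests delays its reentry. This is handled by noting that the SLLN statement is \emph{uniform} over the finite page set and that infinitely many requests to $i$ occur after any finite time, so $i$ must reenter $\mathcal{B}_t$ at some $t \geq T^\star$, after which the deterministic weight ordering protects it forever. A minor secondary issue is defining $\mathcal{B}_{opt}$ precisely when probabilities tie on the boundary between rank $k-1$ and $k$; here the volatility is precisely in the single $k$-th slot excluded by the $\mathcal{B}_t^{-1}$ construction, which is exactly why the theorem is stated for $\mathcal{B}_t^{-1}$ rather than for $\mathcal{B}_t$ itself.
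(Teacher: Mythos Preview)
Your proposal is correct and follows essentially the same route as the paper: use the strong law of large numbers to get $w_i^t/t \to \delta\, p_i$, identify $\mathcal{B}_{opt}^{-1}$ with the top-$(k-1)$ probability pages, and conclude that for large $t$ the weight ordering in the buffer matches the probability ordering so that the evicted page is always outside the top-$(k-1)$ class. Your write-up is in fact more careful than the paper's---you make the absorption step explicit and address the tie at rank $k$, both of which the paper glosses over---but the underlying argument is the same.
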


\begin{proof}
Since we control only eviction but can not control which page is added, this statement is not trivial. 
Another difficulty is due to the randomness of page additions. 

In the IID scenario with fixed probabilities, $\mathcal{B}_{opt}^{-1}$ consists of $k-1$ pages with the highest probability to be queried, where $k$ is the buffer size.
We can model the evolution of buffer states via a Markov chain with discrete time where $\mathcal{B}_t$ is the state at round $t$ and $\left \{\mathcal{B}_t^{-1}\cup\{e\} \right \}_{e}$ is the set of feasible next states. 
The transition probabilities are 
\[
\mathbb{P} \left [\mathcal{B}_{t+1} = \mathcal{B}_t^{-1}\cup\{e\} \mid \mathcal{B}_t\right ] = p_e,\] 
where $p_e$ is the probability that page $e$ is queried.

Since $t \rightarrow \infty$, for each page $e$, with $p_e>0$, there exists $t$ such that $e \in \mathcal{B}_t$, i.e., each page that could be requested would appear in the buffer. 
Moreover, for any page $e$, with $p_e>0$, we have $\lim_{t\rightarrow \infty}\frac{w_e}{t}= c_{get} \cdot p_e$. 
Thus, for a large $t$ we get $\argmin_{e\in \mathcal{B}_t} w_e = \argmin_{e\in \mathcal{B}_t} p_e$ and hence, $\lim_{t\rightarrow \infty} \mathcal{B}_t^{-1} = \mathcal{B}^{-1}_{opt}$.
\end{proof}

\paragraph{EEvA-Seq}
As we already mentioned, a significant problem for any eviction strategy is to guarantee a low level of computational overhead of the eviction logic in heavy-load scenarios. 

A typical solution is to use a worker thread (or simply, worker) that checks the buffer page by page in the search for a victim. The clock-sweep family of algorithms is typical baseline here.
In fact, one can consider an even more greedy strategy when the very first candidate victim page is replaced. Such strategies perform badly in terms of the miss-rate metric but have an advantage of extremely low computational overhead. 

Following this observation, we propose the next strategy \texttt{EEvA-Seq}, which employs page reward $r_i^t$ as the main feature for identifying victim pages and the search horizon that depends on the current load regime.

\begin{algorithm}[!ht]
\caption{\texttt{EEvA-Seq}}
\label{alg::eeva-seq}
\begin{algorithmic}[1]
\REQUIRE{Initial worker's position $pos(0)= pos(T_0)$, a policy cost estimate $c$, initial average weight estimate $\hat{w} := \frac{1}{|\mathcal{B}|} \sum_{e \in \mathcal{B}} w_e$}
    \FOR{$t = 0, 1, \ldots$}  
        \STATE Let $e_{new}$ be a page to be loaded into the buffer
        \WHILE{$\frac{1}{t}w_{pos(t)} > \frac{1}{t}\hat{w} + c$}
        \STATE $pos(t) := pos(t)+1.$
        \ENDWHILE
        \STATE Set $\hat{w} := \hat{w} + \frac{1}{|\mathcal{B}|}\left (w_{e_{new}} - w_{pos(t)}\right ).$
        \STATE Replace page at $pos(t)$ with page $e_{new}$.
    \ENDFOR
\end{algorithmic}
\end{algorithm}

As we already showed, the quantity $\frac{1}{t}r_i^t$ estimates the expected saved cost by keeping a page in the buffer. 
Essentially, the strategy checks which of the decisions is more cost-efficient: replace the currently visited page or skip it and move to the next position. 
We note that by definition, \texttt{EEvA-Seq} strategy is more cost-efficient than the greedy one.

\section{Experimental evaluation}

In this section, we report on the conducted experiments and demonstrate that the proposed approach outperforms SOTA competitors in terms of buffer miss ratio and the cost function~\eqref{eq::time_cost}.
%Since we aim to show the gain from the carefully adjusting page rewards $r_i$ depending on the corresponding query type, %
In the first series of experiments, we use a synthetic environment described below. The source code to reproduce experiments is accessible at  \url{https://github.com/tmp2035/eeva}.

\subsection{Synthetic environment}
\label{sec: exp}

\paragraph{Experimental setting}
We generate a set of tables $\mathcal{T}$ such that each table contains $P_i$ pages, for $i=1,\ldots,|\mathcal{T}|$.
In our experiments, we have $|\mathcal{T}| = 50$ and $P_i \sim \mathcal{U}([P_{\max} / 2, P_{\max}])$, where $P_{\max}$ is the maximum number of pages per table; we take $P_{\max} = 1000$.
Our workload consists of a mixture of \texttt{get}-type and \texttt{scan}-type queries, with the probability of the latter being equal to $p_{scan} = 2\cdot 10^{-3}$ (i.e., in our workload, \texttt{scan}-type queries are rare).
We consider a sequence $Q = \{ q_1, \ldots, q_N\}$, where every $q_i$ is a collection of pages requested by an index or scan access to a table.
We denote by $SQ_i$ and $GQ_i$ the sets of pages from query $q_i$ requested via scan and index access, respectively, i.e. $q_i = SQ_i \sqcup GQ_i$.
Also, we denote by $c_{scan}$ and $c_{get}$ the costs of loading pages from \texttt{scan}-type query and \texttt{get}-type query, respectively. 
These costs are used to evaluate the averaged time cost $C$ of processing $Q$ as follows:
\begin{equation}
    C = \frac{1}{\sum_{k=1}^N |q_k|} (C_{scan} + C_{get}),
    \label{eq::time_cost}
\end{equation}
where
\begin{equation}
\begin{split}
    C_{scan} &= \sum_{i=1}^{N}\sum_{p_j \in MP_i} c_{scan} \mathbb{I}(p_j \in SQ_i), \\
    C_{get} &= \sum_{i=1}^{N}\sum_{p_j \in MP_i} c_{get}\mathbb{I}(p_j \in GQ_i),
    \end{split}
\end{equation}
where $MP_i$ is a set of missed pages from query $q_i$. 
Also, we use the indicator function $\mathbb{I}(x) = \begin{cases} 1, & x = true \\ 0, & x = false \end{cases}$.
In experiments, we use $c_{get} = 1$ and $c_{scan} = 0.8$.

To model patterns of access to tables and pages, we introduce auxiliary probability distributions for them.
Specifically, the query sequence is generated as follows.
Firstly, our emulator decides on a query type  (\texttt{scan} or \texttt{get}) according to the $\xi \sim Be(p_{scan})$.
If $\xi = 1$, then the query type is \texttt{scan}, otherwise the query type is \texttt{get}.
Secondly, a table is chosen according to the distribution $\mathcal{P}^{(T)}_{scan}$ or $\mathcal{P}^{(T)}_{get}$ depending on the query type.
In our experiments, we assign higher probabilities to the first third of the tables to be scanned.
In particular, denote by $p^{(t)}_{scan}$ the probability of the $t$-th table to be scanned.
Then, 
\[
p^{(t)}_{scan} = \begin{cases}
    \frac{10}{4|\mathcal{T}|}, & t \in \{1, \ldots,|\mathcal{T}|/3\} \\
    \frac{1}{4|\mathcal{T}|}, & t \in \{|\mathcal{T}|/3 + 1, \ldots, |\mathcal{T}|\},
\end{cases}
\]
i.e., the first third of the tables have ten times higher probabilities to be scanned.
The distribution $\mathcal{P}^{(T)}_{get}$ has a similar structure, but the probabilities of the first third of the tables are ten times smaller than the others to be requested by \texttt{get}-type queries.
If the query type is \texttt{scan}, then all pages from the selected table are requested.
If the query type is \texttt{get}, a single page is sampled from the selected table according to the Zipf law with parameter $q = 0.1$. 
The smaller $q$ is, the more uniform the distribution of pages within the selected table is.

Since the nature of our scenario is probabilistic, consider the expected number of \texttt{scan}-type queries denoted as $N_{scan}$. 
The total number of queries equals $N = 10^6$. 

In the next sections, we compare our algorithms with competitors in a number of scenarios (Table~\ref{tab::scenarios}) that simulate different load patterns.

\begin{table}[!ht]
\centering
\caption{The list of the considered scenarios with the corresponding parameters $p_{scan}$.}
    \begin{tabular}{lc}
    \toprule
     Scenario & $p_{scan}, \cdot 10^{-3}$ \\
    \midrule
    Low $N_{scan}$ & 0.6 \\
     Medium $N_{scan}$ & 1.3  \\
 High $N_{scan}$ & 1.8  \\
    \bottomrule
    \end{tabular}
    \label{tab::scenarios}
\end{table}

\subsection{Performance comparison in selected scenarios}
As competitors, we consider methods representing typical classes of strategies mentioned in section~\ref{sec::related}. 
In particular, we compare our approach with the recency-based strategy (LRU method), the frequency-based strategy (tinylfu method), the FIFO-based strategy (qdlp method), the ensemble strategy (cacheus method), and the provably optimal ML-based Belady strategy (belady method~\cite{belady1966study}).

\paragraph{Only \texttt{get}-type queries for uniformly distributed tables}

First, we consider the scenario where only \texttt{get}-type queries are simulated, and each page has the same probability of being requested.
The resulting expected average page weights are shown in Figure~\ref{fig:zero_scan_weights}. 
Further, we show how this pattern of the expected average page weights changes after \texttt{scan}-type queries appear.

\begin{figure}[!ht]
    \centering
    \includegraphics[width=0.5\linewidth]{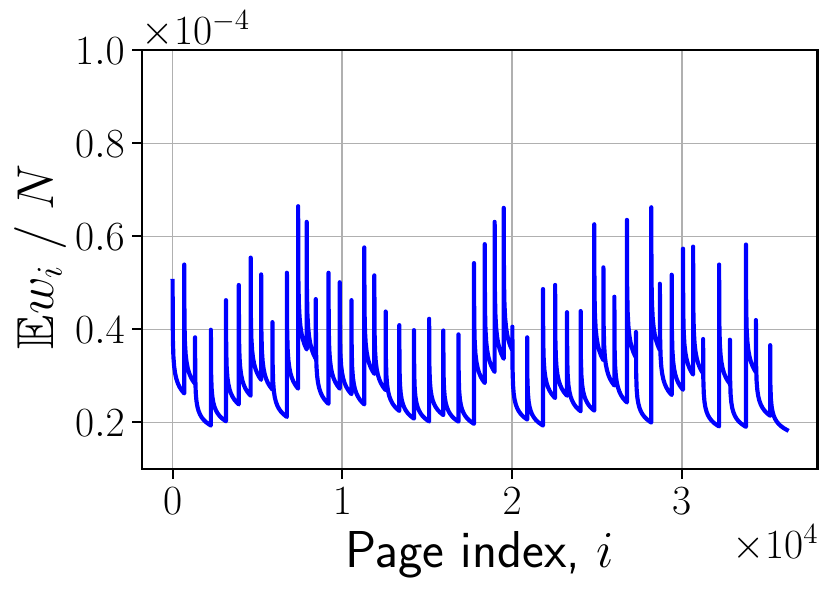}
    \caption{Expected average page weights corresponding to the scenario with only \texttt{get}-type queries, in which every page has the same probability of being requested. Weight values are spread uniformly along the interval of page indices.}
    \label{fig:zero_scan_weights}
\end{figure}

Figure~\ref{fig::zero_scan} demonstrates that EEvA-T and EEvA-Greedy algorithms show the best performance regarding the miss rate and averaged time cost (see~\eqref{eq::time_cost}).
We observe the gap in both metrics that highlights the gain from the proposed EEvA-based eviction strategies.
Although our algorithms can process \texttt{scan}-type queries properly, this experiment demonstrates high performance for the \texttt{get}-only scenario.
In the next paragraphs, we show how the expected averaged page weights and the performance of the considered algorithms are changed if the \texttt{scan}-type queries appear in the trace.

\begin{figure}[!ht]
    \centering
        \includegraphics[width=0.5\textwidth]{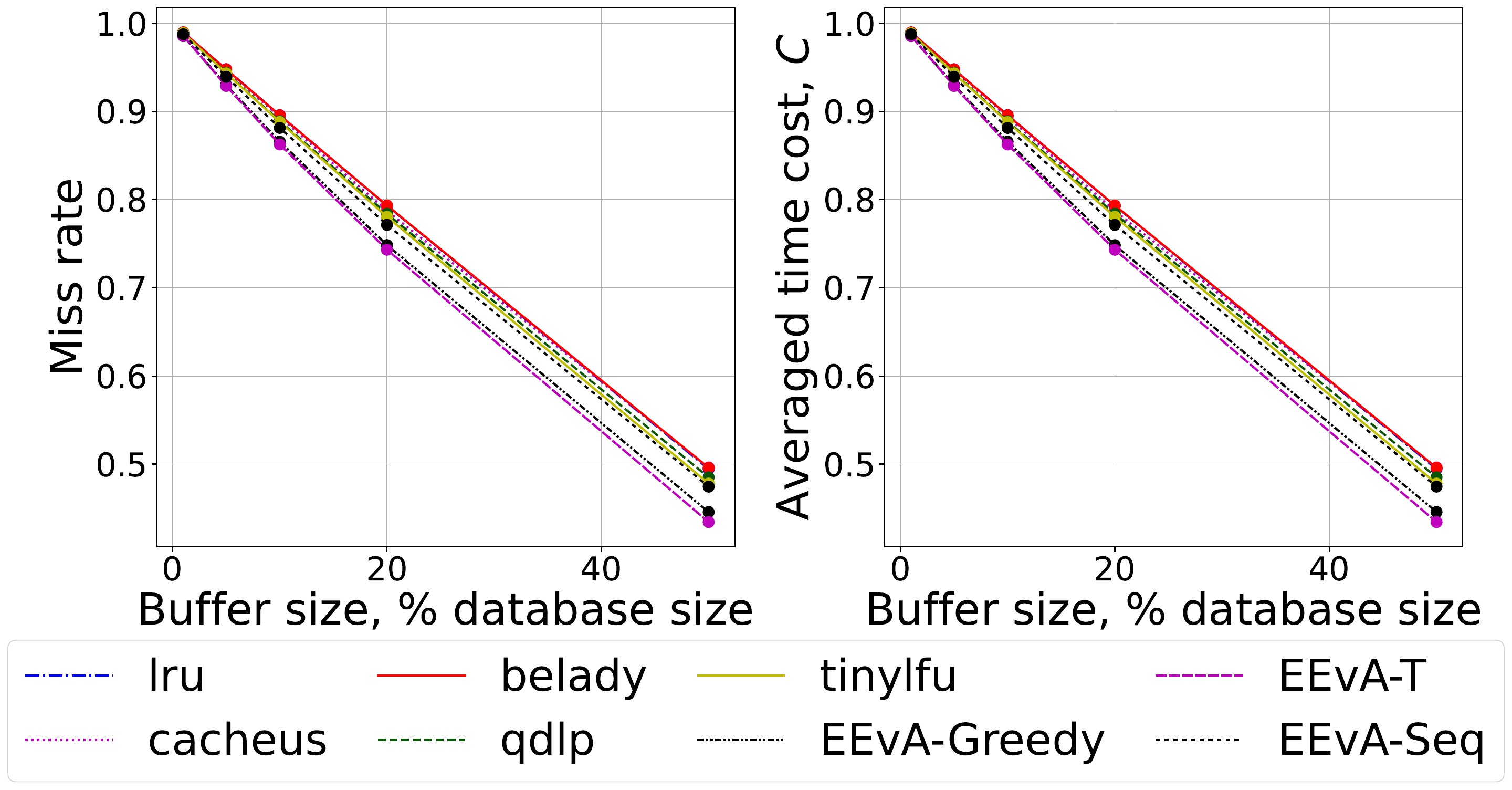}
    \caption{EEvA-T and EEvA-Greedy algorithms outperform competitors in the \texttt{get}-only scenario, where all pages are sampled for \texttt{get}-type queries with similar probabilities. These plots show the robustness of the proposed algorithms in the scenario with no scan queries.}
    \label{fig::zero_scan}
\end{figure}

\paragraph{Low number of \texttt{scan}-type queries.}

Now we consider the scenario in which the number of \texttt{scan}-type queries is low but non-zero, corresponding to $p_{scan} = 0.6 \cdot 10^{-3}$.
This value of $p_{scan}$ gives $N_{scan} = N p_{scan} = 600$.
In this case, we obtain the expected average page weights $\frac{1}{N} \mathbb{E}w_i$ given in Figure~\ref{fig:scan_lower_get_weights}.

\begin{figure}[!ht]
    \centering
    \includegraphics[width=0.5\linewidth]{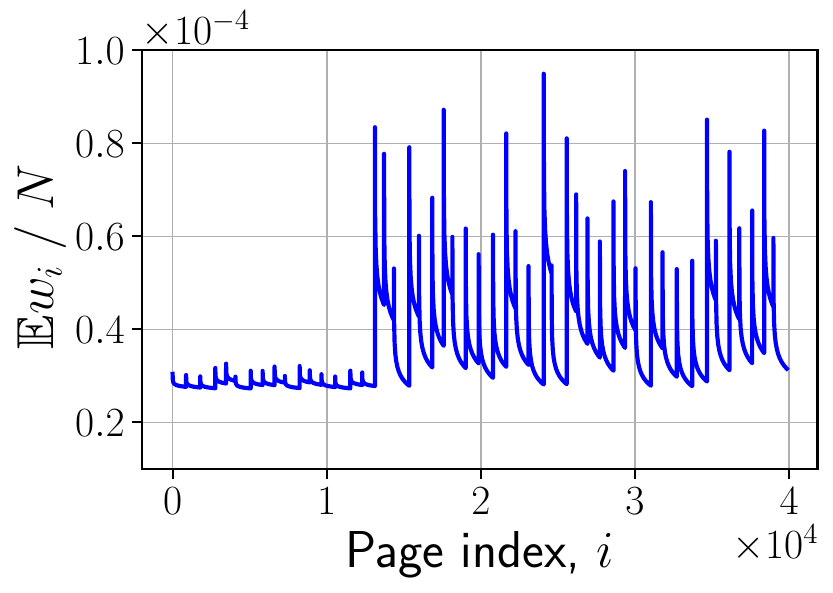}
    \caption{Expected average page weights $w_i$ show that queries to the first pages assigned to scanned tables are less costly than others. 
    Since we take into account not only the frequency of the requested pages but also the corresponding cost ($c_{scan}$ or $c_{get}$), the expected average weights accumulate this information, as well.}
    \label{fig:scan_lower_get_weights}
\end{figure}

% Last but not least regular test scenario simulates the usage pattern of a database such that the get queries dominate scan queries.
The miss ratio and averaged time costs (see~\eqref{eq::time_cost}) obtained in this scenario are given in.
Figure~\ref{fig::scan_lower_get} shows that EEvA-T and EEvA-Greedy have the same performance and they outperform competitors wrt the both metrics.
At the same time, EEvA-Seq gives a slightly lower cost than non EEvA-based algorithms.
This result is quite promising since from our point of view, EEvA-Seq is the most appropriate candidate strategy for implementation in a DB due to its low overhead.
We provide more details in Section~\ref{sec::opengauss}.
% The comparison results are shown in Figures~\ref{fig::scan_lower_get}, where EEvA-based algorithms significantly outperform competitors in both quality metrics miss ratio and averaged time cost.

\begin{figure}[!ht]
    \centering
        \includegraphics[width=0.5\linewidth]{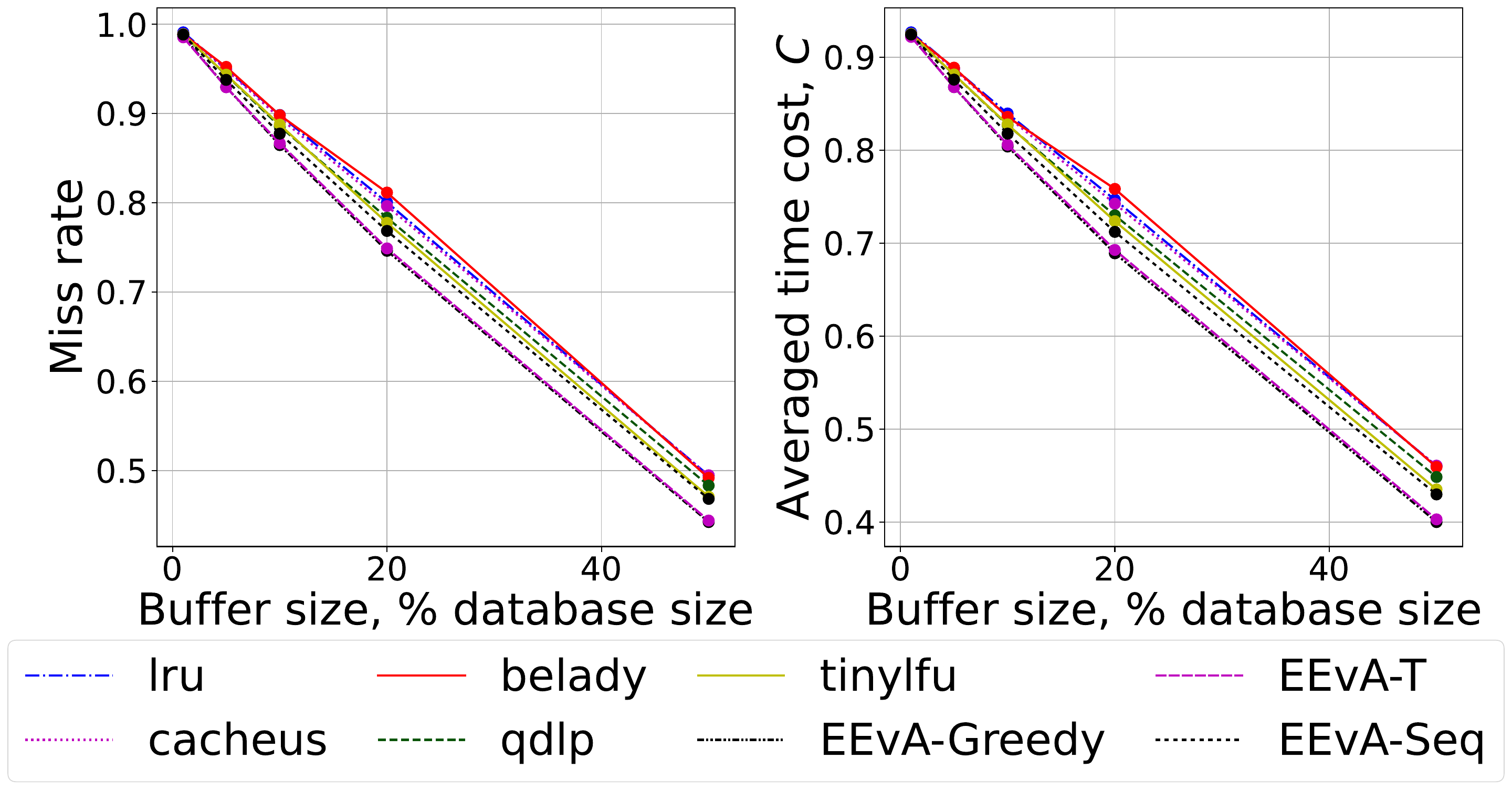}
    \caption{Comparison in the scenario with rare \texttt{scan}-type queries. The lower the miss rate and averaged time costs, the better the performance is. EEvA-based algorithms show a superior performance over the competitors wrt both metrics.}
    \label{fig::scan_lower_get}
\end{figure}

\paragraph{Medium number of \texttt{scan}-type queries.}
The next scenario considers a medium number of \texttt{scan}-type queries, with $p_{scan} = 1.3$ and $N_{scan} = Np_{scan} = 1300$. Here the numbers of \texttt{get}-type and \texttt{scan}-type queries are similar.

%Here we compare the proposed methods inspired by EEvA algorithm with existing alternatives in the scenario where the numbers of \texttt{get}-type and \texttt{scan}-type queries are close.

\begin{figure}[!ht]
    \centering
    \includegraphics[width=0.5\linewidth]{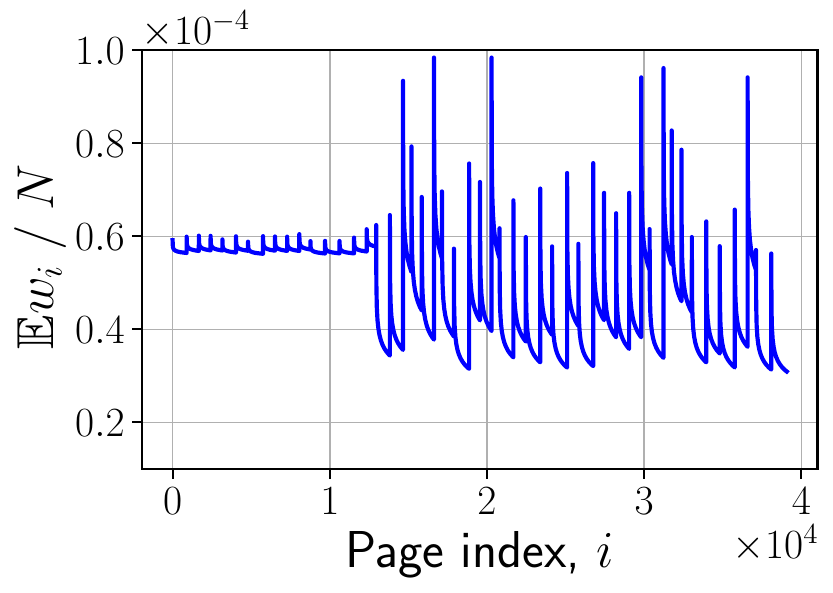}
    \caption{Expected average page weights $w_i$ show that queries to the first pages assigned to scanned tables are comparable in terms of the loading costs compared to other pages. 
    Increasing the value of $p_{scan}$ leads to increased costs of scanned pages, which affects the distribution of the shown expected average weights.}
    \label{fig:scan_eq_get_weights}
\end{figure}

Figure~\ref{fig::scan_eq_get} shows that EEvA-T and EEvA-Greedy strategies are slightly better than the alternatives in terms of both miss ratio and averaged time costs (see~\eqref{eq::time_cost}).
However, EEvA-Seq does not outperform alternatives in contrast to the previous scenario.

\begin{figure}[!ht]
\centering
    \includegraphics[width=0.5\linewidth]{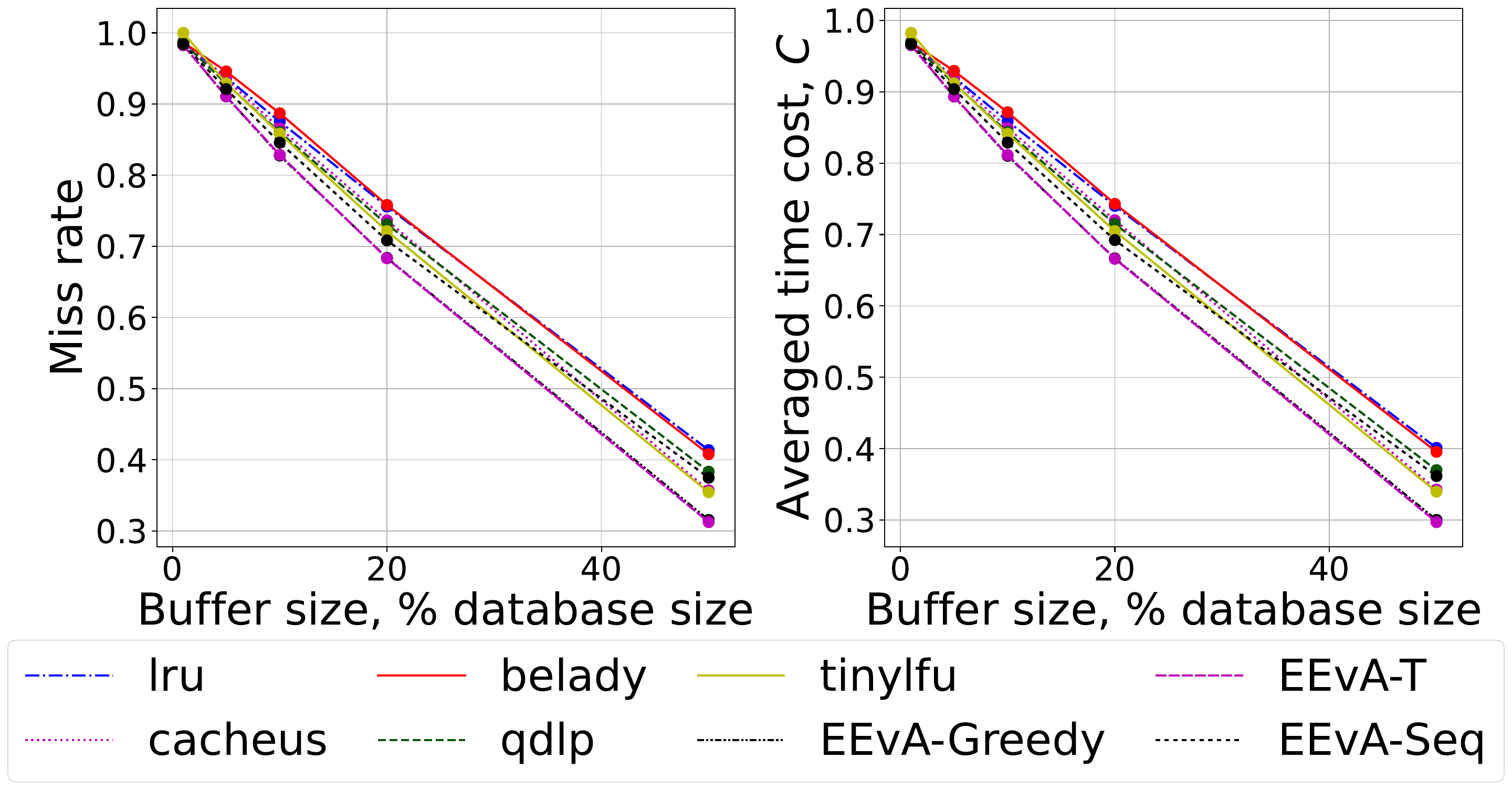}
\caption{Comparison of the considered algorithms in the scenario with a medium number of \texttt{scan}-type queries. The lower the miss rate and averaged time costs, the better performance is. EEvA-T and EEvA-Greedy strategies show a superior performance over the alternatives wrt both metrics.}
    \label{fig::scan_eq_get}
\end{figure}

\paragraph{High number of scan queries.}
Finally, we consider a scenario with a high number of \texttt{scan}-type queries: $p_{scan} = 1.8 \cdot 10^{-3}$ and $N_{scan} = Np_{scan} = 1800.$
Figure~\ref{fig::scan_greater_get_weights} shows the expected averaged weights for pages and highlights the high expected averaged weights for the first pages assigned to \texttt{scan}-type queries.

\begin{figure}[!ht]
    \centering
    \includegraphics[width=0.5\linewidth]{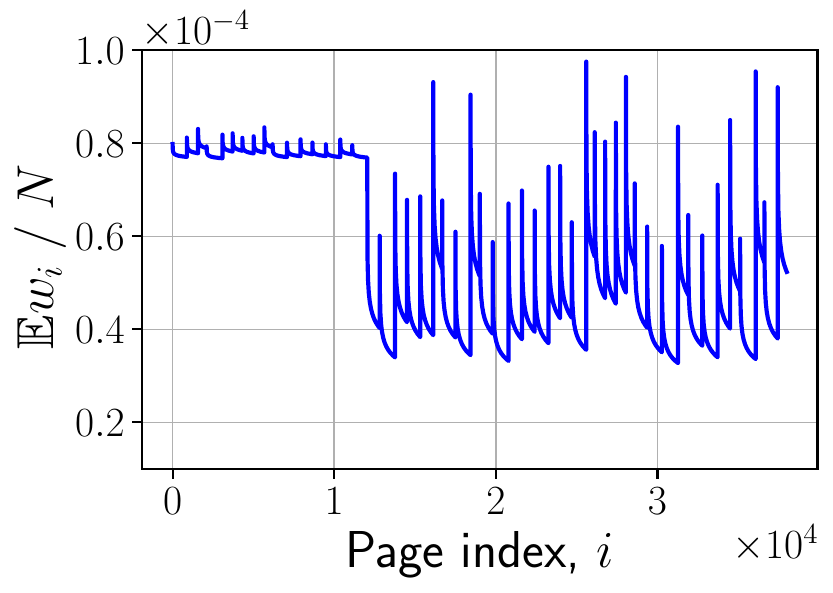}
    \caption{Expected average page weights $w_i$ show that queries to the first pages are much higher in terms of the loading costs compared to other pages.
    }
\label{fig::scan_greater_get_weights}
\end{figure}

Since EEvA-based algorithms pay special attention to \texttt{scan}-type queries, we expected that this scenario should be plausible for demonstrating the performance of our algorithms.
Indeed, Figure~\ref{fig::scan_greater_get} shows the gap in miss rate and averaged time costs~\eqref{eq::time_cost} between the lines corresponding to our methods and competitors.
Although \texttt{tinylfu} becomes closer to EEvA-T and EEvA-Greedy, it still provides a larger miss rate.
Also, in this scenario, EEvA-Seq achieves good performance compared to alternatives.

\begin{figure}[!ht]
    \centering
    \includegraphics[width=0.5\linewidth]{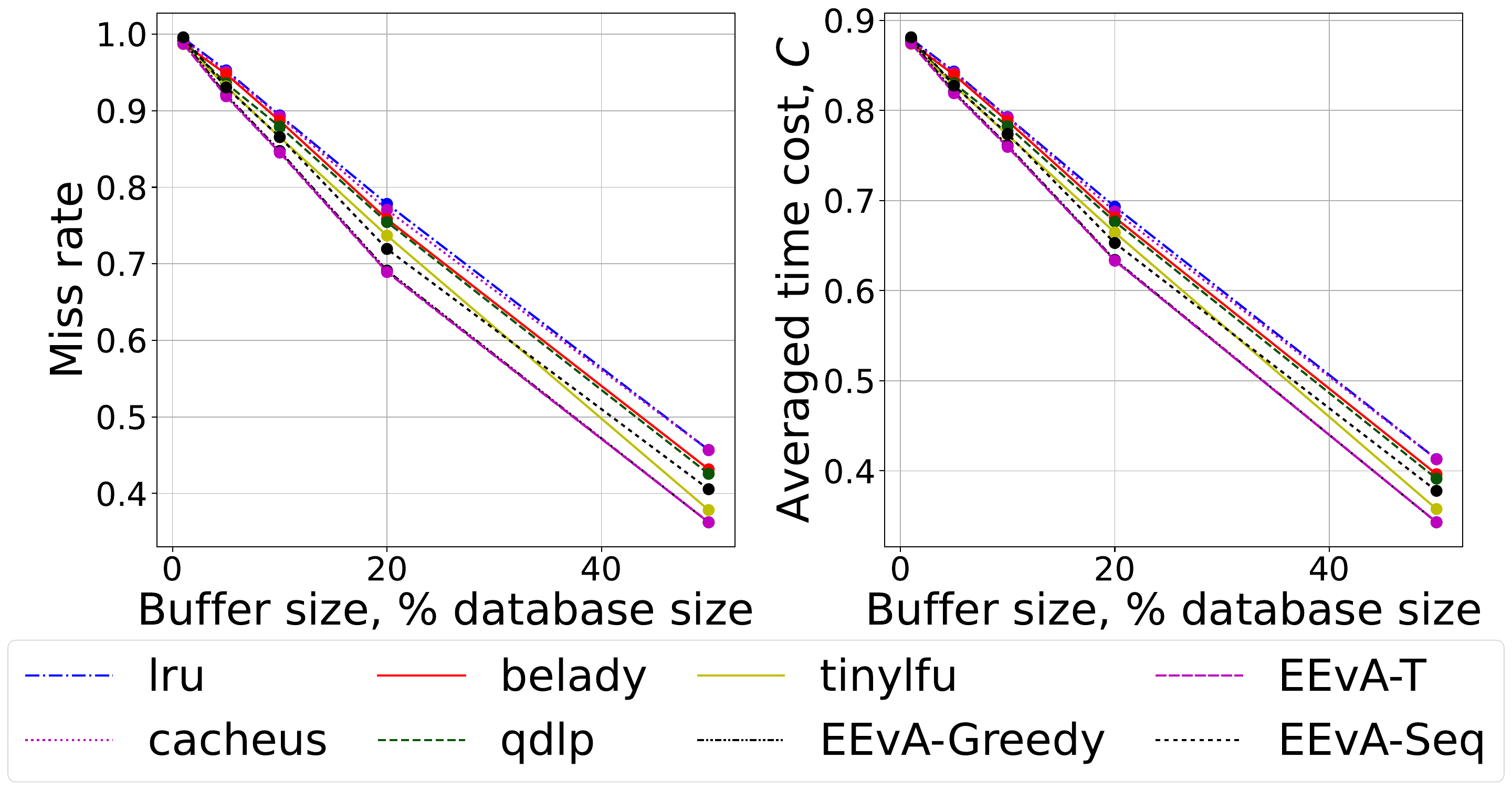}
    \caption{Comparison in the scenario with a high  number of \texttt{scan}-type queries.
    The lower the miss rate and averaged time costs, the better performance is.
    EEvA-based methods outperform almost all the considered competitors except \texttt{tinylfu}, which has a special feature to track \texttt{scan}-type queries, too.
    }
    \label{fig::scan_greater_get}
\end{figure}

% \begin{figure*}[!ht]
%     \centering
%     \begin{subfigure}{0.3\linewidth}
%         \includegraphics[width=\textwidth]{}
%     \end{subfigure}
%     \begin{subfigure}{0.6\linewidth}
%         \includegraphics[width=\textwidth]{}
%     \end{subfigure}
%     \caption{Figure for "scan more distinguish"  generated data}
%     \label{fig::scan_more_distinguish}
% \end{figure*}

\paragraph{The worst-case scenario.}
In addition to the scenarios discussed in the previous paragraphs, here we consider the worst-case scenario.
The feature of this scenario is that queries are distributed in such a way that only EEvA with sampling (see~\eqref{eq::prob_eviction}) achieves a reasonable quality and outperforms considered competitors.
In the worst-case scenario, the buffer size is smaller than the size of the entire database, and the workload is a sequence of \texttt{get}-type queries repeated 10 times.
Most of the competitor algorithms perform similarly to LRU and, therefore, give zero hit rate.
Figure~\ref{fig::worst_case} shows the barplot for different buffer sizes, where EEvA is the second-best algorithm regarding the hit rate. 
The best algorithm in this scenario is \texttt{belady}, which is provably optimal but impractical since it uses full information on future queries.
%Thus, we show that there exists a usage scenario where \emph{only} the EEvA algorithm with sampling of eviction pages according to~\eqref{eq::prob_eviction} provides a reasonable hit rate and can be implemented for real tasks.

% \begin{figure}[!ht]
%     \centering
%         \includegraphics[width=0.7\linewidth]{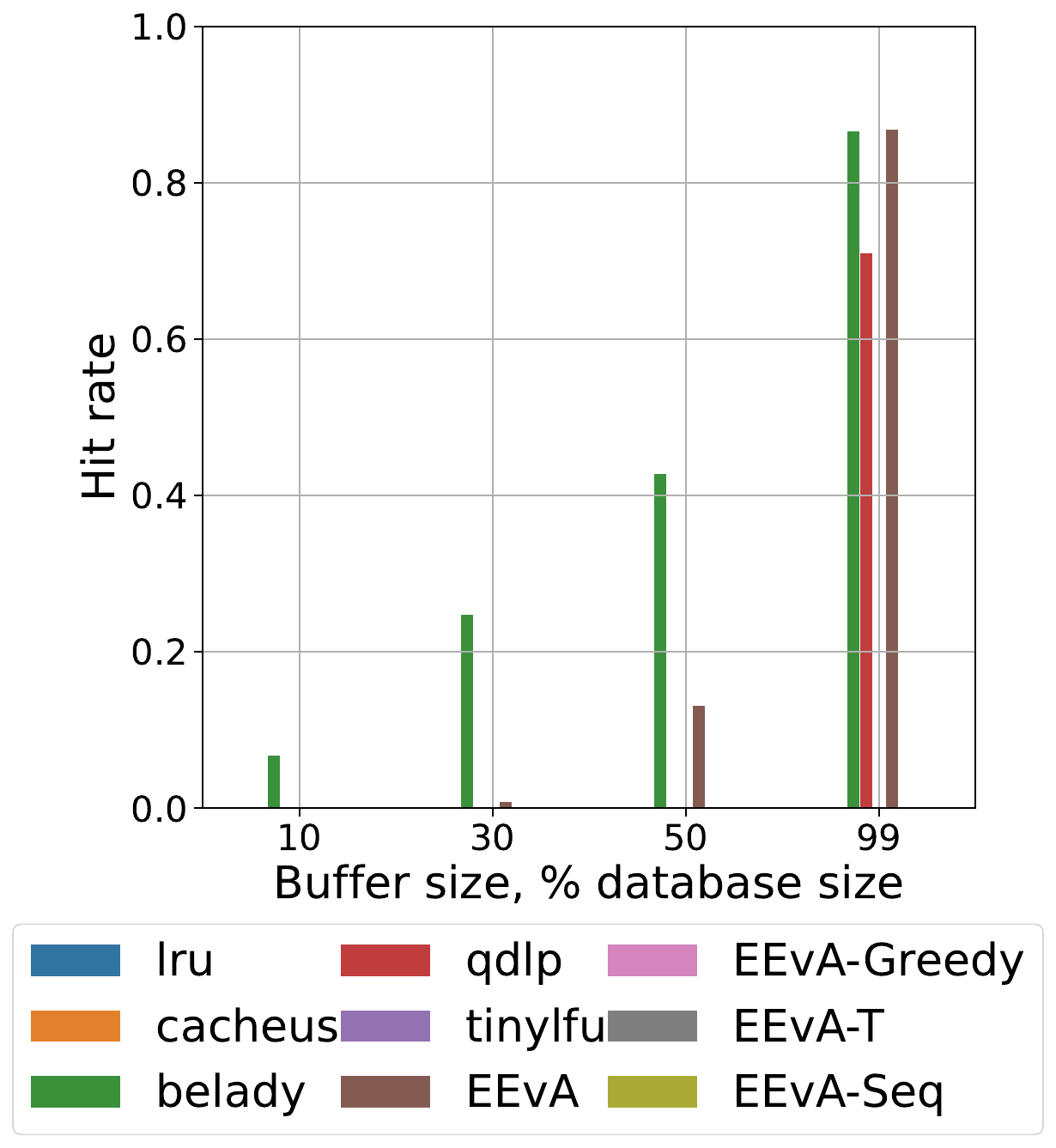}
%     \caption{Performance comparison of the considered algorithms in the worst-case scenario for the selected buffer sizes. EEvA method outperforms alternatives except \texttt{belady}, which is impractical by construction. Also, note that the larger buffer size, the lower miss ratio from EEvA and the more similar performance of EEvA and \texttt{belady}.}
%     \label{fig::worst_case}
% \end{figure}

\begin{figure}[!ht]
    \centering
        \includegraphics[width=0.5\linewidth]{}
    \caption{Comparison in the worst-case scenario for various buffer sizes. EEvA outperforms all alternatives except \texttt{belady}. Also, note that the larger the buffer size is, the higher the hit rate from EEvA, and the more similar performance of EEvA and \texttt{belady} is.}
    \label{fig::worst_case}
\end{figure}

\subsection{Dynamics of miss rate for regular scenarios}

In the previous section, we have demonstrated the miss rates and averaged time cost after processing the complete trace of queries.
However, not only the total miss rate but also the dynamics of change of the miss rate over the processed queries are crucial for evaluating the performance of the considered algorithms.
To analyze dynamics, we compute the cumulative miss rate after processing a given number of pages for every algorithm.
To improve plot readability, we present the computed cumulative miss rates with respect to the same metric for the EEvA-T algorithm.
Specifically, we show how much (in percentage) the cumulative miss rates for other algorithms are worse than the ones for the EEvA-T algorithm with an increasing number of processed pages. 
The resulting plots are given in Figure~\ref{fig::miss_dynamics}, where the averaged values and corresponding variance over 5 runs are presented.
The results demonstrate that the EEvA-T algorithm outperforms competitors: it gives the smallest cumulative miss rate after processing a moderate number of pages.
Also, EEvA-Greedy is the second-best algorithm in all the considered regular scenarios.
Note that EEvA-Greedy significantly outperforms other non EEvA-based competitors, even considering the variance of the metric.
The dynamics of the relative cumulative miss rate indicate that we achieve smaller miss rates than competitors not only at the final stage of trace processing but also after processing a moderate number of pages.
Future work will include a theoretical analysis of EEvA-based algorithms and the identification of a sufficient number of pages to obtain proper page weights.

\begin{figure}[!ht]
    \centering
        \includegraphics[width=0.7\linewidth]{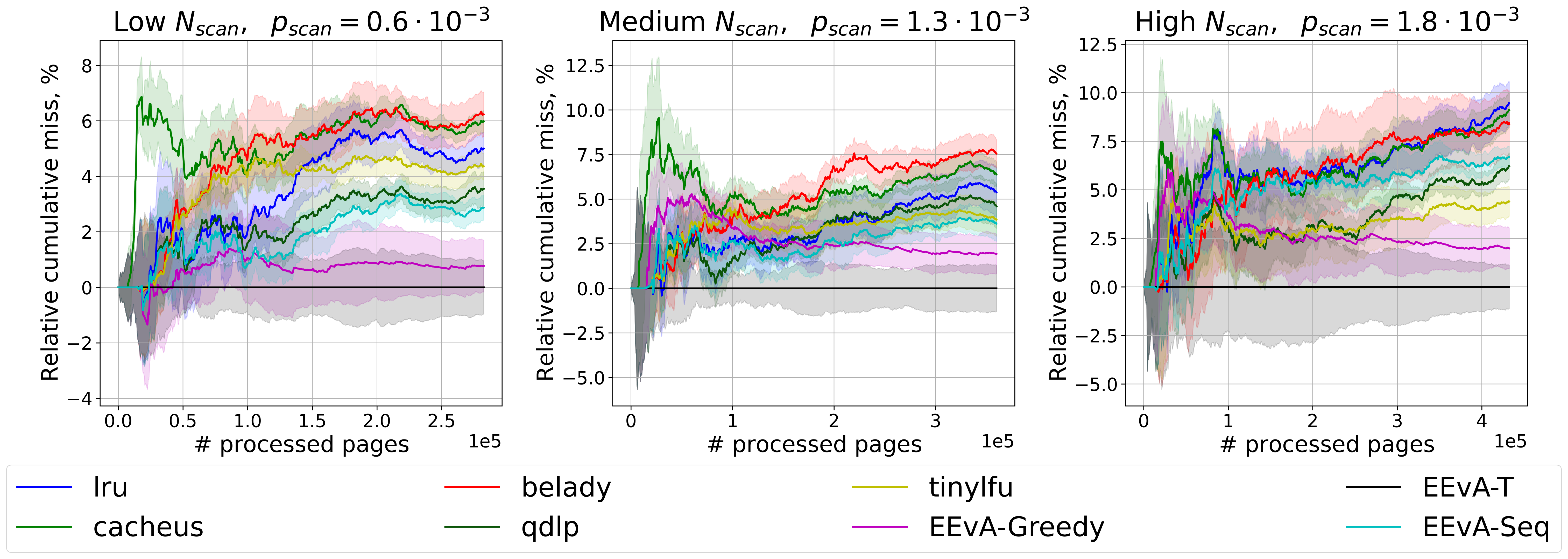}
    \caption{Comparison of the dynamics of the relative cumulative miss rate for the considered algorithms. EEvA-T gives the smallest values, therefore we consider it as the baseline. The reported averaged values and the corresponding half of variance are computed over 5 runs. Non-EEvA-based algorithms show significantly larger cumulative miss rates than our methods in all regular scenarios.}
    \label{fig::miss_dynamics}
\end{figure}

\subsection{Ablation study}
\label{sec: abl}

Here, we present an ablation study of the considered algorithms for different values of $p_{scan}$ parameter.
The buffer size in these tests is 10\% of the database size, $c_{scan} = 0.9$, $c_{get} = 1$, and the number of queries is $N = 5\cdot 10^5$.
If the value of $c_{scan}$ is much smaller than $c_{get}$, then the performance of EEvA-based algorithms is much better. 
Figure~\ref{fig::ablation_p_scan} shows the mean and variance for the miss rate and averaged time cost metrics in aggregation over 5 runs.
Note that EEvA-based algorithms show the best quality among competitors \emph{uniformly} for the considered range of $p_{scan}$ values.
This plot confirms the robustness of the proposed methods to the hyperparameter responsible for the ratio of scan queries.

In addition, we observe that our methods show the best performance wrt both metrics for $p_{scan} = 0$.
This phenomenon is clear because only \texttt{get}-type queries are simulated, and the probability of sampled pages from \texttt{scan}-accessed tables is 10 times smaller.
Thus, we artificially decrease the number of pages sampled with a high probability compared to pages from  \texttt{scan}-accessed tables.
Both metrics increase to maximum values after a smooth increase of $p_{scan}$.
This region of $p_{scan}$ corresponds to scenarios in which \texttt{scan}-type queries accidentally appear in traces and can not be properly managed by the considered algorithms.
After that, the number of \texttt{scan}-type queries becomes sufficiently large, and the algorithms can track them properly for efficient buffer utilization. 

\begin{figure}[!ht]
    \centering
        \includegraphics[width=0.5\linewidth]{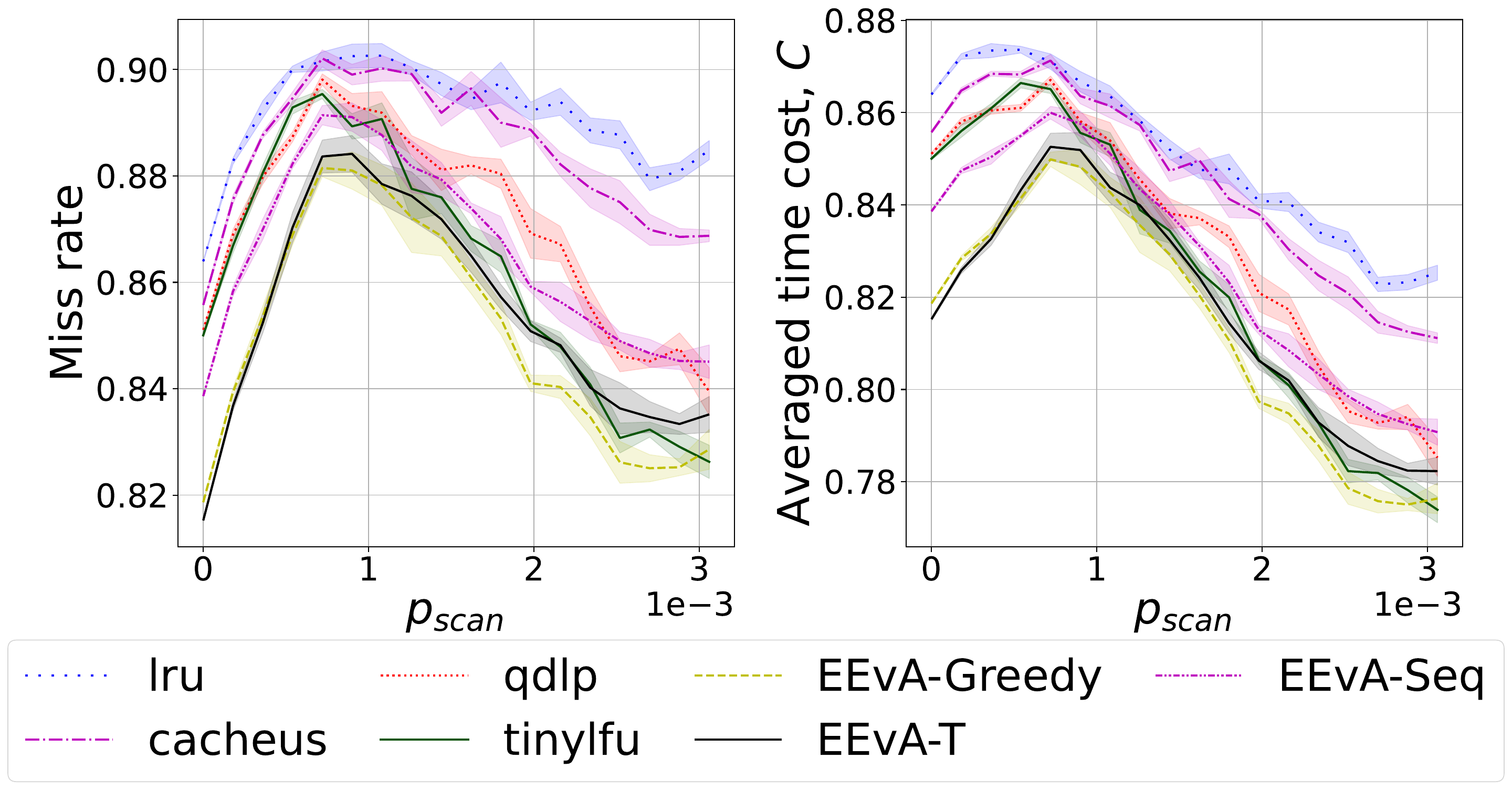}
    \caption{EEvA-based algorithms show the lowest miss rate and averaged time cost uniformly for the considered range of $p_{scan}$ values. The presented mean and variance values are computed over 5 runs.}
    \label{fig::ablation_p_scan}
\end{figure}

\subsection{Evaluation in DBMS}
\label{sec::opengauss}
To experimentally evaluate our approach in a DB system, we have implemented \texttt{EEvA-Seq} strategy in the buffer manager of an open-source openGauss~\cite{openGauss} database. 
We have benchmarked the proposed strategy on TPC-C~\cite{tpc-c} against the default page replacement strategy implemented in openGauss that we use as a baseline. 
We now briefly describe the implementation of the baseline and then present our experimental setting.

The main ideas behind the state-of-the-art implementation of buffer manager in the openGauss database are as follows. 
The buffer memory space is divided into $k$ parts, where $k$ is a configuration parameter. 
For each part, a set of candidate pages for eviction is maintained as a free list with elements referencing buffer pages. 
The free list is maintained by a worker thread that runs over the corresponding buffer part and pushes to the free list pages that are neither dirty, nor pinned. 
When a page needs to be loaded into the buffer, the first page referenced in the free list is taken as the victim. 
Then, the corresponding first element is removed from the free list. 
If all free lists are empty (which can be the case for heavy DB workloads), a victim page is directly searched in the buffer pool. 
The search begins with an element pointed by a special clock cursor that is also used for thread synchronization. 
Upon visiting the next page, the cursor moves one step ahead and is rewinded when the end of the buffer memory fragment is reached. 
The first non-dirty and non-pinned page found this way is taken as the victim. 

According to this logic, the decision on a page for replacement has a randomness factor introduced by read/write operations.
In contrast to the variants of frequency-based replacement strategies (e.g., $k$-bit clock algorithms adopted in many databases), the free list implementation in openGauss can be more efficient for DB application scenarios with large buffer size and random data access patterns (e.g., those exhibited by TPC-C queries). 
On the other hand, such a strategy may not be efficient in serving workloads with shifted data access patterns, including recurrent access to popular tables or ranges.

TPC-C benchmark includes five types of transactions: \texttt{newOrder}, \texttt{payment}, \texttt{orderStatus}, \texttt{delivery}, and \texttt{stockLevel}. 
Of these transactions, two types (\texttt{orderStatus} and \texttt{stockLevel}) are read-only transactions, and the remaining three types combine write and update operations. 
Importantly, one can change the distribution of these transactions by modifying benchmark parameters. 
In our experiments, we consider three settings: base (default TPC-C configuration), read-only, and write/update as presented in Table~\ref{tab::TPCC-Scenarios}.

% \begin{table}[!ht]
% \caption{Distribution of transaction types in TPC-C scenarios}
% \label{tab::TPCC-Scenarios}
%     \begin{tabular}{cccccc}
%     \toprule
%      Scenario & newOrder & payment & orderStatus &	delivery & stockLevel \\
%     \midrule
%     base & 45\% & 43\% & 4\% &4\%	& 4\% \\
% read & 0\%	& 0\% & 50\% & 0\% & 50\% \\
% write/update & 49\% & 47\%	& 0\% & 4\%	& 0\% \\
%     \bottomrule
%     \end{tabular}
%     \label{tab::scenarios}
% \end{table}

\begin{table}[!ht]
\centering
\caption{Distribution of transaction types in TPC-C benchmark scenarios used in experiments.}
\label{tab::TPCC-Scenarios}
    \begin{tabular}{cccc}
    \toprule
     Transaction type & base & read-only & write/update \\
    \midrule
    \texttt{newOrder} & 45\% & 0\% & 49\% \\
\texttt{payment} & 43\%	& 0\% & 47\% \\
\texttt{orderStatus} & 4\% & 50\% & 0\% \\
\texttt{delivery} & 4\% & 0\% & 4\% \\
\texttt{stockLevel} & 4\% & 50 \% & 0\% \\
    \bottomrule
    \end{tabular}
\end{table}

For each scenario, we perform 20 runs of TPC-C benchmark: 10 runs with the default openGauss buffer management strategy and another 10 runs with \texttt{EEvA-Sec} strategy. 
The duration of each run is 5 minutes. 
For each run, we have measured the following parameters: \textit{transaction count} (the total number of completed transactions),  \textit{hit count} (the total number of buffer hits), and \textit{hits per transaction} (which is the number of hits divided by the number of transactions). 
The averaged values of these parameters over all runs are given in Table~\ref{tab::openGauss-Results}.

\begin{table}[!ht]
\centering
\caption{Evaluation results in three TPC-C scenarios. The proposed eviction strategy \texttt{EEvA-Seq} improves performance in all the considered scenarios.}
\label{tab::openGauss-Results}
% \resizebox{\columnwidth}{!}{%
    \begin{tabular}{ccccc}
    \toprule
     Scenario & Parameter &	Baseline &	\texttt{EEvA-Sec} &	Boost \% \\
    \midrule
   \multirow{3}{*}{base} & Transaction count & 57214 &	59426 &	+3,87 \\
   & Hit count &	14434992 &	15091925 & +4 \\
   & Hit per transaction &	252,29 &	253,96 &	+0,66 \\ \hline

\multirow{3}{*}{read} & Transaction count & 572000 & 711414 & +24,37 \\
   & Hit count &	148108032 &	206405162 & +39 \\
   & Hit per transaction &	258,93	& 290,15 &	+12,06 \\ \hline

\multirow{3}{*}{write/update} & Transaction count & 54588 & 55877 &	+2,36 \\
   & Hit count &	13107804 &	13424229 & +2 \\
   & Hit per transaction &	240,12 &	240,26	& +0,01 \\ 
    \bottomrule
    \end{tabular}
    % }
\end{table}

Results in Table~\ref{tab::openGauss-Results} show that \texttt{EEvA-Seq} strategy gives a higher transaction and hit count in all three scenarios. 
However, the most significant performance improvement is achieved in the read-only scenario.
The less significant improvement in the write/update scenario can be explained as follows. 
In bulk page writes or updates, most of the buffer becomes occupied by dirty pages, which cannot be replaced since they contain modified data and are waiting to be flushed to disk. 
Therefore, in this situation, searching for a candidate page for replacement becomes more difficult, and the impact of the computational overhead introduced by \texttt{EEvA-Seq}, compared to the baseline strategy, becomes more pronounced. 
Secondly, in massive updates, data migrates to new locations because the updated data is written to new pages. 
This effect introduces significant complexity and uncertainty for calculating pages' popularity based on the statistics of requests to these pages. 
We think that taking these effects into account can facilitate the development of even more efficient replacement strategies, which is one of the subjects of our future work.

This experimental evaluation shows that adaptive page replacement strategies can significantly increase the performance of the buffer manager in scenarios where read operations dominate over write/update operations. 
At the same time, optimizing page replacement policies alone is not enough to achieve optimal buffer manager performance. 
Smart policies for page flushing remain an important ingredient for optimal bugger management.

\section{Conclusion}
In this work, we have proposed a family of expert-based page replacement strategies for a DB buffer manager that employs experts of various granularity and differentiates between index and sequential access in updating expert rewards. 
We have provided an extensive experimental comparison of our strategies on synthetic benchmarks and showed that they outperform competitors (including SOTA recency/frequency-based strategies and ensemble strategy) in terms of the buffer hit ratio and averaged time costs. We have implemented the most promising strategy in the kernel of openGauss DB in which the logic of page replacement is tuned for DB application scenarios with a large buffer pool and heavy concurrent workload (such as TPC-C). Experimental evaluation showed that our strategy outperforms the baseline page replacement strategy of openGauss on TPC-C. Future work will include a deeper theoretical analysis of our strategies and improvements for practical implementation in a buffer manager, which takes into account page-flushing logic.   
%Taking into account different costs for different types of queries, our approach fits the underlying access pattern better than alternatives.
%This feature leads to better buffer utilization and, therefore, faster responses to user queries.
%We demonstrate the gain obtained from the proposed approach within simulations performed in the synthetic environment that is flexible to different access patterns and in the TPC-C benchmark run in the openGauss database.
%In both settings, the proposed method for construction eviction strategies shows a larger hit rate than the corresponding baselines.

% \begin{acks}
%  This work was supported by the [...] Research Fund of [...] (Number [...]). Additional funding was provided by [...] and [...]. We also thank [...] for contributing [...].
% \end{acks}

%\clearpage

\bibliographystyle{unsrt}
\bibliography{biblio}

\end{document}